\documentclass[acmtog]{acmart}

\AtBeginDocument{%
  }

\setcopyright{cc}
\setcctype{by}
\copyrightyear{2026}
\acmYear{2026}
\acmDOI{10.1145/3829340.3842222}
\acmISBN{979-8-4007-2842-6/2026/12}
\acmArticle{000}
\acmConference[SA Conference Papers '26]{SIGGRAPH Asia 2026 Conference Papers}{December 01--04, 2026}{Kuala Lumpur, Malaysia}
\acmBooktitle{SIGGRAPH Asia 2026 Conference Papers (SA Conference Papers '26), December 01--04, 2026, Kuala Lumpur, Malaysia}

\acmSubmissionID{71}

\usepackage{booktabs} 
\usepackage{verbatim}
\usepackage{multirow}
\usepackage{graphicx}
\usepackage{amsmath}
\usepackage{enumerate}
\usepackage{color}
\usepackage{alltt}
\usepackage{listings}
\usepackage{subcaption}
\usepackage{acmart-taps}
\usepackage[ruled]{algorithm2e}

\definecolor{highlightvone}{RGB}{0,0,0}
\definecolor{highlightvtwo}{RGB}{0,0,0}
\definecolor{highlightvthree}{RGB}{0,0,0}

\long\def\highlightv#1#2{{
  \ifcase#1\color{highlightvone}
  \or\color{highlightvone}
  \or\color{highlightvtwo}
  \or\color{highlightvthree}
  \else\color{highlightvone}
  \fi
  #2}}

\begin{document}
\title{Semi-Implicit Pairwise Descent for Nonlocal Continuum Mechanics}

\author{Xukun Luo}
\authornote{Equal contribution.}
\orcid{0009-0003-6194-3328}
\email{luoxukun2022@iscas.ac.cn}
\affiliation{%
  \institution{Institute of Software, Chinese Academy of Sciences and UCAS}
  \city{Beijing}
  \country{China}}

\author{Xiao Cheng}
\authornotemark[1]
\orcid{0009-0006-4582-4742}
\email{chengxiao25@ios.ac.cn}
\affiliation{%
  \institution{Institute of Software, Chinese Academy of Sciences and UCAS}
  \city{Beijing}
  \country{China}}

\author{Yuzhong Guo}
\orcid{0009-0000-4578-9701}
\email{guoyuzhong@iscas.ac.cn}
\affiliation{%
  \institution{Institute of Software, Chinese Academy of Sciences}
  \city{Beijing}
  \country{China}}

\author{Ying Qiao}
\orcid{0000-0002-0002-2328}
\email{qiaoying@iscas.ac.cn}
\affiliation{%
  \institution{Institute of Software, Chinese Academy of Sciences}
  \city{Beijing}
  \country{China}}

\author{Wencheng Wang}
\orcid{0000-0001-5094-4606}
\email{whn@ios.ac.cn}
\affiliation{%
  \institution{Institute of Software, Chinese Academy of Sciences}
  \city{Beijing}
  \country{China}}

\author{Xiaowei He}
\authornote{Corresponding author.}
\orcid{0000-0002-8870-2482}
\email{xiaowei@iscas.ac.cn}
\affiliation{%
  \institution{Institute of Software, Chinese Academy of Sciences}
  \city{Beijing}
  \country{China}}

\makeatletter
\let\SIPD@mkauthorsaddresses\@mkauthorsaddresses
\def\@mkauthorsaddresses{UCAS: University of Chinese Academy of Sciences.\par\par
  \SIPD@mkauthorsaddresses}
\makeatother

\renewcommand{\shortauthors}{Luo et al.}

\begin{abstract}
We propose Semi-Implicit Pairwise Descent (SIPD), a unified nonlocal pairwise framework for simulating large-scale hyperelastic materials involving complex contact and friction. 
By reformulating the Finite Element Method (FEM) equations of motion into a pairwise force representation from a nonlocal perspective, our approach avoids costly Hessian computations, leading to a reduction in per-iteration computational overhead.
\highlightv1{Furthermore, we propose an analytical projection strategy for projecting our Hessian-free coefficient matrices to positive semi-definiteness.}
And we treat contact and friction as a unified anisotropic elastic energy, allowing for a seamless integration into the elastic solver framework.
We mathematically prove that our method is unconditionally stable and numerically convergent.
Experimental results demonstrate that SIPD achieves real-time performance for million-scale simulations even under intricate contact and friction conditions.
\end{abstract}

\begin{CCSXML}
<ccs2012>
<concept>
<concept_id>10010147.10010371.10010352.10010379</concept_id>
<concept_desc>Computing methodologies~Physical simulation</concept_desc>
<concept_significance>500</concept_significance>
</concept>
</ccs2012>
\end{CCSXML}

\ccsdesc[500]{Computing methodologies~Physical simulation}
\keywords{nonlocal continuum mechanics, peridynamics, finite element analysis, semi-implicit successive substitution method, hyperelasticity, contact}

\begin{teaserfigure}
\centering
\includegraphics[width=0.28\linewidth]{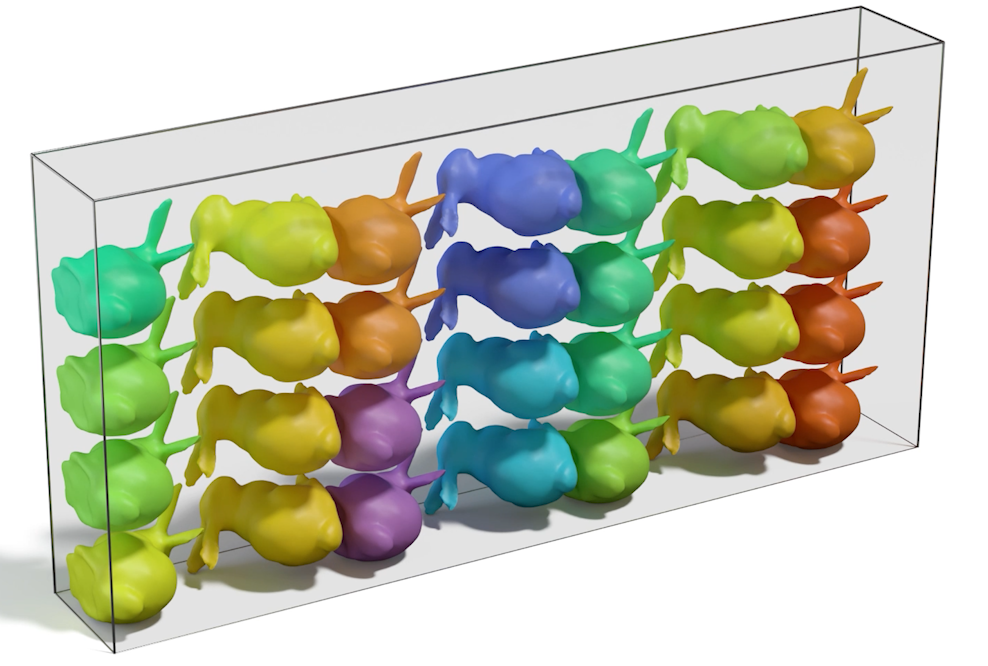}
\includegraphics[width=0.28\linewidth]{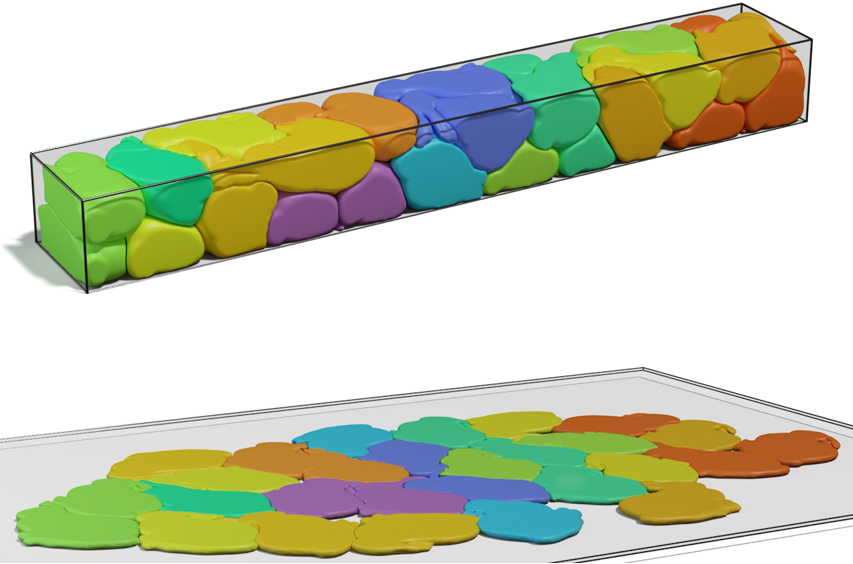}
\includegraphics[width=0.42\linewidth]{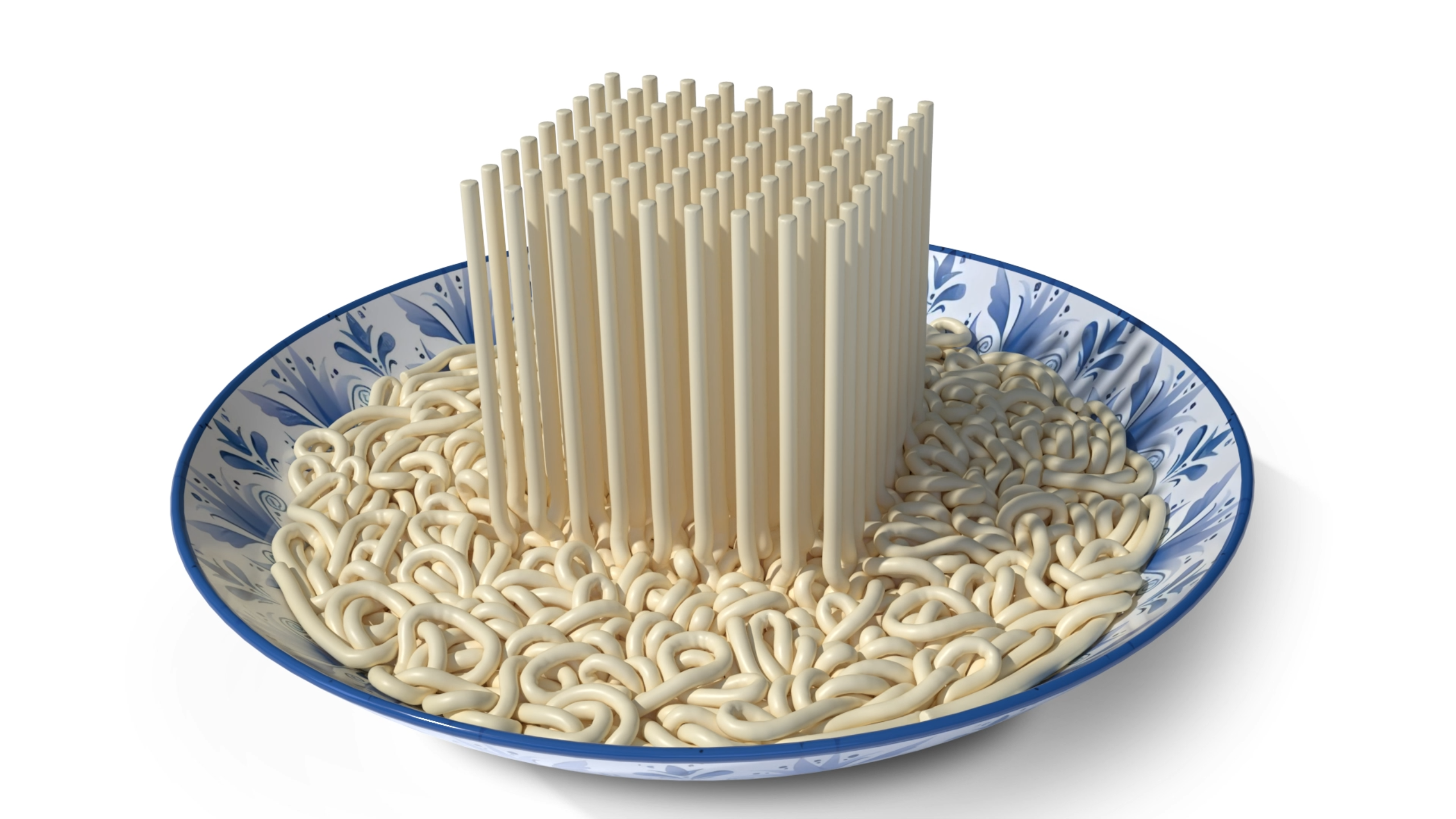}
\caption{\textbf{Left}: 28 bunnies (1.03M tetrahedra, 225K vertices) are compressed and deformed by the walls of a transparent box upon impact. Our solver maintains \textbf{real-time} performance with an average cost of 3ms per step ($h=3ms$). \textbf{Right}: Noodles (1.8M tetrahedra, 697K vertices) falling into a bowl with 2.62M active collision pairs, computed at 14.3ms per step ($h=3ms$).}
\label{fig:fall}
\end{teaserfigure}

\maketitle

\section{Introduction}
The fast, robust, and accurate simulation of hyperelastic materials involving discontinuous contact forces has long been a fundamental goal in computer graphics applications. 
Since explicit time integration schemes impose strict limitations on the time step size, semi-implicit~\cite{Bridson:2005:SCF,Schroeder:2011:AEF} and fully implicit~\cite{Liu:2013:FSM,Macklin:2016:XPBD,Smith:2018:SNH} time integration schemes are increasingly favored for real-time applications. 
Moreover, with the significant increase in GPU computing power over the past decade, the development of implicit solvers capable of fully leveraging GPU advantages for deformable body dynamics becomes a new trend.
However, fully addressing the challenges of simulating hyperelastic materials with both nonlinear behavior and discontinuous contact forces on GPUs requires not only a straightforward extension of traditional numerical models but also a deeper understanding and reformulation of the underlying problem.

Within classical continuum mechanics, the behavior of hyperelastic materials is typically modeled using partial differential equations.
\highlightv2{When the continuum is discretized into tetrahedra or other control volumes using finite element analysis, the governing equations reduce to a finite nonlinear system in terms of the nodal positions~\cite{Irving:2004:IFE,teran2005robust,Sifakis:2012:FEM}.}
To solve the nonlinear system, which may include various constraints arising from contact, sequential algorithms typically employ the Newton-Raphson solver. 
This solver begins by linearizing the elastic forces and then calculates the energy Hessian. 
However, ensuring that the stiffness matrix remains positive definite is challenging, particularly when arbitrary forms of discontinuous contact forces are included~\cite{shi1992discontinuous,KANG20256977}.
Although a straightforward implementation of the Newton-Raphson solver on GPUs can significantly enhance performance due to increased computational power, the algorithm is not well suited to GPU architectures because of the high dynamic memory overhead associated with storing the Hessian matrix.
To address this issue, Wang and Yang~\shortcite{wang2016descent} propose using a diagonalized Hessian matrix to calculate the descent direction, ensuring positive definiteness by simply checking the values of the diagonal elements. 
Chen et al.~\shortcite{chen2024vertex} instead propose Vertex Block Descent (VBD) and derive a $3 \times 3$ Hessian matrix for each vertex to update each vertex's position by solving a small local linear system. However, they still need to calculate the fourth-order tensors 
\highlightv2{by taking the derivative of the first Piola–Kirchhoff stress tensor $\mathbf{P}$ with respect to the deformation gradient $\mathbf{F}$}, i.e., $\partial \mathbf{P}/\partial \mathbf{F}$.
\highlightv2{Moreover, VBD does not guarantee a descent direction because each vertex block may not be positive definite.}

We present the very GPU-friendly semi-implicit pairwise descent to solve the nonlinear optimization problem.
Unlike the Newton-Raphson solver that requires calculating the second-order derivative of the energy function, our key innovation relies on reformulating the variational problem into a nonlocal peridynamics framework depending only on the first-order derivative of the energy density function, in which both internal and boundary forces are represented in a unified pairwise form. 
\highlightv2{Thus, we efficiently construct the $3 \times 3$ coefficient matrix for each vertex using only the first Piola--Kirchhoff stress $\mathbf{P}$ and the deformation gradient $\mathbf{F}$ obtained from the constitutive model.}
During each iteration, a novel semi-implicit splitting based on~\cite{lu2023projective} and an SVD-free analytical projection strategy are employed to ensure that the positive component of the coefficient matrix used for implicit integration remains positive definite. 
Consequently, the $3 \times 3$ coefficient matrix assembled at each vertex is guaranteed to be positive definite.
Our method can uniformly accommodate scenarios involving discontinuous contact and friction, achieving real-time simulation for million-scale scenes.

\section{Related Work}
Continuum mechanics centers on describing the macroscopic behavior of matter through partial differential equations (PDEs). 
Since Terzopoulos et al.~\shortcite{terzopoulos1987elastically} pioneered this theory to computer graphics, the Finite Element Method (FEM) has become the standard for deformation modeling.
Sin et al.~\shortcite{sin2013vega} subsequently extended this framework to hyperelastic models within an implicit Euler integration scheme.
Kugelstadt et al.~\shortcite{kugelstadt2018fast} employed operator splitting to separately \highlightv3{solve} the shear and bulk energies for efficient computation.
Schneider et al.~\shortcite{schneider2018decoupling} proposed an adaptive scheme for adjusting the order of finite element basis functions, achieving high-fidelity results even on coarse meshes.
While FEM excels at modeling continuous deformations, it inherently struggles with physical discontinuities like fracture and contact, where spatial derivatives become ill-defined.

To bridge this gap, the nonlocal continuum mechanics, exemplified by Peridynamics (PD)~\cite{silling2007peridynamic}, replaces differential operators with integral operators using a pairwise function $\psi(\mathbf{x}, \mathbf{x}')$ within a specific horizon. 
This formulation effectively captures \highlightv3{discontinuous} behavior, including crack initiation, strain localization, and fracture.
Leveraging nonlocal theory, Levine et al.~\shortcite{levine2014peridynamic} reformulated mass-spring systems to achieve efficient brittle fracture simulation. 
Similarly, Chen et al.~\shortcite{chen2018peridynamics} developed a plastic model within the PD framework for ductile fracture, and Han et al.~\shortcite{han2021lagrangian} introduced a nonlocal diffusion model for temperature-dependent fracture simulation. 
Regarding solution strategies, He et al.~\shortcite{he2017projective} proposed a corotational peridynamics formulation based on Projective Dynamics, while Lu et al.~\shortcite{lu2023projective} extended this to hyperelastic materials and developed a semi-implicit acceleration scheme.

Nevertheless, PD models often lack the precision of FEM in representing continuous behaviors, leading researchers~\cite{dorduncu2024review} to explore coupling strategies that utilize both models to handle continuous and discontinuous regions respectively. 
For instance, Yu et al.\shortcite{yu2025implicit} developed a coupling strategy based on force equilibrium, employing FEM for elastic continua and PD for crack regions. 
This coupling paradigm has also informed the integration of FEM with the Material Point Method (MPM)~\cite{li2024dynamic}. 
However, these hybrids often face challenges in numerical stability and implementation complexity.

\highlightv2{Contact is inherently characterized by discontinuous behavior, manifested as displacement discontinuities across interfaces. 
Thus early research typically treat collisions as impulsive forces~\cite{bridson2002robust}. 
Modern approaches treat contact as a continuous optimization problem, via penalty methods~\cite{gast2015optimization,chen2024vertex}, dynamic penalty functions~\cite{harmon2009asynchronous,narain2012adaptive,tang2018cloth,giles2025augmented}, or Incremental Potential Contact (IPC)~\cite{li2020incremental,li2020codimensional}.
These nonlinear optimization problems are often solved using full Newton methods~\cite{li2020incremental,huang2024gipc}. 
A related class of quasi-Newton elastodynamic solvers combines L-BFGS low-rank curvature updates with lagged second-order information. 
Liu et al.~\shortcite{liu2017quasi} reused an FEM Hessian approximation across iterations and employed L-BFGS updates to incorporate current curvature information. 
Subsequent extensions employed more structured initializers, including a domain-decomposed model for mesh-based elastodynamics and a multigrid model for MPM, while retaining the L-BFGS iteration~\cite{li2019decomposed,wang2020hierarchical}.
More recently, Lu and Hu~\shortcite{lu2025reliable} introduced Reliable Iterative Dynamics (RID), which combines a force-splitting scheme with a dual-descent formulation to reduce linearization errors and produce reliable intermediate updates.
Other parallel solution strategies include position-based nonlinear Gauss--Seidel~\cite{chen2023position}, preconditioned nonlinear conjugate gradient~\cite{shen2024preconditioned}, Jacobi preconditioning based on the diagonal Hessian~\cite{wang2016descent}, and VBD, which performs local vertex updates through graph coloring~\cite{chen2024vertex}. 
These methods provide effective strategies for improving convergence, robustness, and parallel efficiency, but still rely on second-order Hessian information, either directly or through lagged or structured approximations. Motivated by reducing this dependency, we seek to solve the nonlinear system using only first-order constitutive information.}

\section{Foundations}
Since the purpose of this work is to develop a unified model capable of handling both continuous and discontinuous material behaviors, the following two sections first review fundamental concepts in both classical and nonlocal continuum mechanics.
To avoid notational ambiguity, the symbol $\mathbf{x}$ is used to denote coordinates in the reference configuration, whereas $\mathbf{y}$ denotes coordinates in the deformed configuration. 
Vertices with lumped mass in a tetrahedral mesh are uniformly referred to as particles.
\subsection{Finite Element Method}
\label{sec:fem}
The core idea of the finite element method (FEM) is to approximate the solution of a continuous problem by breaking the domain into a finite number of elements and solving a discretized variational problem over these elements~\cite{wriggers2008nonlinear}.
For a given continuum medium, the \highlightv3{equation of motion} is expressed as
\begin{equation}
\rho \ddot{\mathbf{y}} = \nabla _{\mathbf{x}} \cdot \mathbf{P} + \mathbf{b},
\label{eq:momentum}
\end{equation}
where $\nabla _{\mathbf{x}} \cdot \mathbf{P}$ denotes the divergence of the first Piola–Kirchhoff stress tensor $\mathbf{P}$ with respect to the reference coordinates ${\mathbf{x}}$, and $\mathbf{b}$ denotes the external force.
When the continuum medium is discretized into a tetrahedral mesh, the deformed coordinates $\mathbf{y}$ \highlightv3{are} approximated using shape functions $N_i(\mathbf{x})$: $\mathbf{y} =\sum\limits_{i} {N _i \left( \mathbf{x} \right){\mathbf{y}_i}},$
where $i$ denotes the vertex index and $\mathbf{y}_i$ is the deformed position of vertex $i$.
The shape function $N_i(\mathbf{x})$ satisfies the Kronecker-$\delta$ property~\cite{koschier2017robust} that $N_i(\mathbf{x}_j) = \delta _{ij}$.
The deformation gradient on each tetrahedral element $e$ then becomes 
\begin{equation}
\mathbf{F}_e(\mathbf{x}) = \sum\limits_{i \in \mathcal{V}_e} \mathbf{y}_i \otimes \nabla N_{i}(\mathbf{x}),
\label{eq:FinFEM}
\end{equation}
where $\mathcal{V}_e$ \highlightv3{denotes} the set of vertices of tetrahedral element $e$.
\highlightv2{Substituting this approximation into the weak form of Equation~\ref{eq:momentum} and integrating over the tetrahedral mesh yields the following discretized equation of motion for particle $i$~\cite{Sifakis:2012:FEM}:}
\begin{equation}
m_i \ddot{\mathbf{y}}_i = - \sum_{e \in \mathcal{T}_i} V_e \mathbf{P}_e \nabla N_{e} + \mathbf{b}_i
\label{eq:gov_local}
\end{equation}
\highlightv2{where $\mathcal{T}_i$ denotes the set of tetrahedral elements incident to particle $i$, $m_i$ and $\mathbf{b}_i$ denote the lumped mass and external force, and $V_e$ is the volume of element $e$.}

\subsection{Peridynamics}
As a typical nonlocal continuum theory, peridynamics replaces spatial derivatives with integral operators, and material points interact directly with other points within a finite distance. As a result, the governing equations remain valid even when the displacement field is discontinuous.
The \highlightv3{equation of motion} in a state-based model is expressed as~\cite{silling2007peridynamic}:
\begin{equation}
\rho \ddot{\mathbf{y}} = \int_{\mathcal{H}_\mathbf{x}} \{ \underline{\mathbf{T}}_\mathbf{x}\langle{\mathbf{x}' - \mathbf{x}}\rangle-\underline{\mathbf{T}}_{\mathbf{x}'}\langle{\mathbf{x}-\mathbf{x}'}\rangle \} dV_{\mathbf{x}'} + \mathbf{b},
\end{equation}
where $\mathcal{H}_\mathbf{x}$ is a spherical neighborhood centered at $\mathbf{x}$, and $\underline{\mathbf{T}}_\mathbf{x}$ and $\underline{\mathbf{T}}_{\mathbf{x}'}$ are the force vector states at points $\boldsymbol{\mathbf{x}}$ and $\mathbf{x'}$, respectively.
A state of order $m$ in peridynamics is defined to be a function that maps a bond $\xi = \mathbf{x}' - \mathbf{x}$ to a tensor of order $m$, typically denoted as $\underline{\mathbf{A}}\langle \xi \rangle$.
\highlightv2{Intuitively, a bond is simply the reference-space vector connecting two material points, and a force state maps each such bond to the interaction force transmitted along that connection. The neighborhood integral therefore accumulates pairwise interactions, much as FEM assembly accumulates the element forces shared by neighboring vertices.}
A significant difference of the momentum equation in a nonlocal continuum is that the force vector is defined as a function of two material points.
The same idea is also extended to define other physical quantities.
By introducing the deformation vector state $\underline{\mathbf{Y}}\left\langle \xi  \right\rangle = \mathbf{y}' - \mathbf{y}$, the deformation gradient in the nonlocal form is expressed as
\begin{equation}
\mathbf{F} = {\int_{{\mathcal{H}_\mathbf{x}}} {\underline{\omega}\left\langle \boldsymbol{\xi}  \right\rangle \underline{\mathbf{Y}}\left\langle \boldsymbol{\xi} \right\rangle  \otimes {\mathbf{K}^{ - 1}} \boldsymbol{\xi} d{V_\mathbf{x}}} },
\label{eq:FinPD}
\end{equation}
where $\underline{\omega}\left\langle \xi  \right\rangle$ is the weighting function that maps each bond $\xi$ to a scalar value, $\mathbf{K}$ denotes the shape tensor formulated as follows
\begin{equation}
\mathbf{K} = {\int_{{\mathcal{H}_\mathbf{x}}} {\underline{\omega}\left\langle \boldsymbol{\xi}  \right\rangle \boldsymbol{\xi} \otimes \boldsymbol{\xi} d{V_\mathbf{x}}} }
\label{eq:K}.
\end{equation}
By comparing Equation~\ref{eq:FinPD} with Equation~\ref{eq:FinFEM} through dimensional analysis, it can be observed that both equations share the same structure in computing the deformation gradient. 
The key difference is that Equation~\ref{eq:FinFEM} employs local variables ($\mathbf{y}_i$ and $\nabla N_i(\mathbf{x})$), whereas Equation~\ref{eq:FinPD} relies on nonlocal variables ($\underline{\mathbf{Y}}\langle \boldsymbol{\xi} \rangle$ and $\mathbf{K}^{-1}\boldsymbol{\xi}$) to evaluate the deformation gradient. 
Substituting the nonlocal expression of the deformation gradient into the momentum equation and integrating over $\mathcal{H}_x$ yields the discretized momentum equation in a nonlocal form expressed as follows~\cite{lu2023projective}:
\begin{equation}
m_i \ddot{\mathbf{y}}_i = V_i \sum\limits _j \omega_{ij} V_j \bigg(\mathbf{P}_i \mathbf{K}_i^{-1} + \mathbf{P}_j \mathbf{K}_j^{-1}\bigg) \boldsymbol{\xi} _{ij} + \mathbf{b}_i.
\end{equation}
where the subscripts $i$ and $j$ are used to denote particle $i$ and $j$, the bond $\xi_{ij} = \mathbf{x}_j - \mathbf{x}_i$.
Other variables such as $V$ and $\mathbf{P}$ have the same physical meanings as those in Equation~\ref{eq:gov_local}, except they are now defined on particles.

\section{A Nonlocal Pairwise Model for Continuous Deformation}
\highlightv1{In this section, we reformulate the FEM using the nonlocal concept of peridynamics introduced in Section 3.2. The resulting formulation is theoretically equivalent to the traditional FEM.}

\subsection{Global Optimization in a Nonlocal Formulation}
Consider an object that undergoes a deformation that maps every material point from its reference space $\mathbf{x}$ to the deformed space $\mathbf{y}=\vec{\phi}(\mathbf{x})$, where $\vec{\phi}$ is an arbitrary continuous function.
Without loss of generality, we consider the tetrahedron shown in Figure~\ref{fig:control} as a Lagrangian finite control volume, where the vertex positions in the reference and deformed configurations are denoted as $\mathbf{x}_{i=1, 2, 3, 4}$ and $\mathbf{y}_{i=1, 2, 3, 4}$.
\highlightv2{Assuming that the deformation map is linear within tetrahedron $e$, the lumped force on particle $i$ is derived from the finite element formulation in Section~\ref{sec:fem}:}
\begin{equation}
\mathbf{f}_{ie} = - V_e \mathbf{P}_e \nabla N_{ie},
\label{eq:fi}
\end{equation}
\highlightv2{where $N_{ie}(\mathbf{x})$ is the linear shape function associated with vertex $i$ and satisfies $N_{ie}(\mathbf{x}_j) = \delta _{ij}$.}
In classical continuum mechanics, $\mathbf{f}_{ie}$ can be interpreted as the total force acting on the boundary triangle opposite to vertex $i$, as shown in Figure~\ref{fig:control}(a).
To reformulate $\mathbf{f}_{ie}$ in a nonlocal form, we introduce the following theorem.

\begin{theorem}
Given a tetrahedron with four vertex positions denoted by $\mathbf{x}_{i=1, 2, 3, 4}$ and linear shape functions $N_{i}(\mathbf{x})$ satisfying $N_{i}(\mathbf{x}_j) = \delta _{ij}$, we have
\begin{equation}
\nabla {N_i} =  - \sum\limits_{j \in \mathcal{V}_e} {\mathbf{K}^{ - 1}_{e}} {{\boldsymbol{\xi} _{ij}}}, {\kern 5pt} \mathbf{K}_{e} = \sum\limits_{j \in \mathcal{V}_e} {{\boldsymbol{\xi} _{ij}} \otimes {\boldsymbol{\xi} _{ij}}}.
\label{eq:split}
\end{equation}
\end{theorem}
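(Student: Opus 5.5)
The plan is to use the fact that $N_i$ is affine on the tetrahedron. Its gradient $\nabla N_i$ is then a constant vector, and its directional increments along the edges emanating from vertex $i$ are fixed by the Kronecker-$\delta$ property. Contracting these increments against the bonds $\boldsymbol{\xi}_{ij}$ produces exactly the shape tensor $\mathbf{K}_e$ applied to $\nabla N_i$, and inverting gives Equation~\ref{eq:split}. Throughout I read $\mathbf{K}_e$ as the shape tensor seen from vertex $i$ (the bonds are $\boldsymbol{\xi}_{ij}=\mathbf{x}_j-\mathbf{x}_i$), which is the discrete analogue of Equation~\ref{eq:K} with unit weights $\underline{\omega}$.

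\emph{Step 1 (edge increments).} Since $N_i$ is linear, for every $j\in\mathcal{V}_e$ we have exactly
\begin{equation*}
N_i(\mathbf{x}_j)-N_i(\mathbf{x}_i)=\nabla N_i\cdot\boldsymbol{\xi}_{ij}.
\end{equation*}
Using $N_i(\mathbf{x}_j)=\delta_{ij}$, this gives $\nabla N_i\cdot\boldsymbol{\xi}_{ij}=-1$ for $j\neq i$. For $j=i$ the identity is trivially $0=0$, because $\boldsymbol{\xi}_{ii}=\mathbf{0}$.

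\emph{Step 2 (assemble the shape tensor).} Next compute
\begin{equation*}
\mathbf{K}_e\nabla N_i=\sum_{j\in\mathcal{V}_e}\boldsymbol{\xi}_{ij}\,(\boldsymbol{\xi}_{ij}\cdot\nabla N_i).
\end{equation*}
The $j=i$ term vanishes since $\boldsymbol{\xi}_{ii}=\mathbf{0}$. Each remaining term equals $-\boldsymbol{\xi}_{ij}$ by Step 1. Hence $\mathbf{K}_e\nabla N_i=-\sum_{j\in\mathcal{V}_e}\boldsymbol{\xi}_{ij}$, where including $j=i$ in the sum is harmless because that term is zero.

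\emph{Step 3 (invertibility).} It remains to show that $\mathbf{K}_e$ is invertible. This is the only point needing an assumption: the tetrahedron must be non-degenerate, which is already implicit in $V_e>0$ and in the existence of the linear shape functions. For any vector $\mathbf{v}$,
\begin{equation*}
\mathbf{v}^T\mathbf{K}_e\mathbf{v}=\sum_{j\neq i}(\boldsymbol{\xi}_{ij}\cdot\mathbf{v})^2\ge 0,
\end{equation*}
so $\mathbf{K}_e$ is symmetric positive semi-definite. The quadratic form vanishes only if $\mathbf{v}$ is orthogonal to the three edge vectors from $i$. These edge vectors span $\mathbb{R}^3$ for a non-degenerate tetrahedron, so $\mathbf{v}=\mathbf{0}$ and $\mathbf{K}_e$ is symmetric positive definite. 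Multiplying the identity of Step 2 by $\mathbf{K}_e^{-1}$ then yields Equation~\ref{eq:split}.

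There is no real obstacle in this argument; the most delicate point is bookkeeping rather than analysis. One must note that $\mathbf{K}_e$ depends on the chosen vertex $i$, even though the notation suppresses this, and that including $j=i$ in both sums contributes nothing.
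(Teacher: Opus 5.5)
Your proof is correct and follows essentially the route the paper relies on (the paper defers its proof to the supplement): linearity with the Kronecker-$\delta$ property gives $\nabla N_i\cdot\boldsymbol{\xi}_{ij}=-1$ for $j\neq i$, contracting with the bonds gives $\mathbf{K}_e\nabla N_i=-\sum_{j}\boldsymbol{\xi}_{ij}$, and non-degeneracy of the tetrahedron makes $\mathbf{K}_e$ invertible. You are also right that $\mathbf{K}_e$ depends on the base vertex $i$ even though the paper's notation hides this, and it is worth saying so explicitly.
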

\begin{proof}
    See Supplementary Material, Section~A.
\end{proof}
Note $\mathbf{K}_{e}$ is a special case of the shape tensor defined in Equation~\ref{eq:K}, obtained by applying a constant weight function $\underline{\omega}\left\langle \boldsymbol{\xi}  \right\rangle = 1$.
After substituting Equation~\ref{eq:split} into Equation~\ref{eq:fi}, the particle force $\mathbf{f}_{ie}$ is decomposed into three pairwise forces, as illustrated in Figure~\ref{fig:control}(b). 
The physical interpretation of $\mathbf{f}_{ie}$ can thus be understood as the actions exerted by the other particles.
\highlightv2{After summing over all tetrahedra incident to particle $i$, the governing equation is reformulated as}
\begin{equation}
m_i \ddot{\mathbf{y}}_i = \sum\limits _j \bigg( \sum_{e \in \mathcal{T}_i \bigcap \mathcal{T}_j} V_e \mathbf{P}_e \mathbf{K}_e^{-1} \bigg) \boldsymbol{\xi} _{ij} + \boldsymbol{b}_i,
\label{eq:gov}
\end{equation}
where $V_e$, $\mathbf{P}_e$ and $\mathbf{K}_e$ denote physical quantities associated with the tetrahedron that shares the edge connecting particle $i$ and $j$.
Accordingly, the total force acting on particle $i$ by its neighboring particle $j$ is formulated in the following nonlocal form:
\begin{equation}
\mathbf{f}_{ij} = \bigg( \sum\limits_{e \in \mathcal{T}_i \bigcap \mathcal{T}_j} V_e \mathbf{P}_e \mathbf{K}_e^{-1} \bigg) (\mathbf{x}_j - \mathbf{x}_i).
\label{eq:force1}
\end{equation}

Since $\mathbf{f}_{ij}$ may contain nonlinear terms, Equation~\ref{eq:gov} is preferably solved using the optimization-based implicit Euler time integration method~\cite{gast2015optimization}.
Thus, the system can be converted into the following optimization problem:
\begin{equation}
{\mathcal L}({\mathbf{y}}) = \frac{1}{{2{h^2}}}\sum\limits_i {{m_i}||{{\mathbf{y}}_i} - {\mathbf{y}}_i^*|{|^2}}  + \sum\limits_e {V_e \Psi_e (\mathbf{y}) } ,
\label{eq:obj}
\end{equation}
where ${\mathbf{y}}_i^* = {\mathbf{y}}_i + h{\mathbf{v}}_i$, and $\Psi_e$ represents the strain energy density function.

\subsection{A Pairwise Force Model for Hyperelastic Continua}
In hyperelasticity, the strain energy density function $\Psi$ is typically formulated as a scalar function of deformation gradient $\mathbf{F}$, i.e., $\Psi=\Psi(\mathbf{F})$.
\highlightv1{For isotropic materials, $\Psi$ can be further simplified as a function of the three principal invariants of Cauchy-Green deformation tensor $\mathbf{C} = \mathbf{F}^T\mathbf{F}$, i.e., $\Psi=\Psi(I_1,I_2,J)$. Following~\cite{smith2019analytic}, the first Piola-Kirchhoff stress $\mathbf{P}$ is then derived as:
\begin{equation}
    \mathbf{P} = \frac{\partial \Psi }{\partial \mathbf{F}} = 2\frac{\partial\Psi}{\partial I_1}\mathbf{F} + 2\frac{\partial\Psi}{\partial I_2}(I_1\mathbf{F}-\mathbf{FC})+J\frac{\partial\Psi}{\partial J}\mathbf{F}^{-T}.
    \label{eq:iso}
\end{equation}
The above expression for $\mathbf{P}$ not only simplifies the calculation, but also makes it easier to guarantee the convergence for iterative methods in solving the nonlinear optimization problem in Equation~\ref{eq:obj}. More details will be discussed in Section~\ref{sec:APD}.
}

\highlightv2{Within element $e$, linear interpolation gives $\mathbf{y}_j-\mathbf{y}_i=\mathbf{F}_e\boldsymbol{\xi}_{ij}$ and hence $\mathbf{F}_e^{-1}(\mathbf{y}_j-\mathbf{y}_i)=\boldsymbol{\xi}_{ij}$. Substituting this identity into Equation~\ref{eq:force1} yields the following pairwise force between particles $i$ and $j$:}
\begin{equation}
{\mathbf{f}_{ij}} = \sum\limits_{e \in \mathcal{T}_i \bigcap \mathcal{T}_j} V_e {\mathbf{P}_e} \mathbf{K}{_e^{ - 1}} {\mathbf{F}_e^{-1}} { {(\mathbf{y}_j - \mathbf{y}_i)}}.
\label{eq:f_pos}
\end{equation}

\textbf{Remove the shape tensor $\mathbf{K}$}: To further simplify the formulation, we propose a novel solution to eliminate the inverse of the shape tensor $\mathbf{K}_e^{-1}$ by introducing the following identical relation:
\begin{equation}
    \sum_{j \in \mathcal{V}_e} \mathbf{K}_e^{-1} \boldsymbol{\xi}_{ij} = \sum_{j \in \mathcal{V}_e} w_{ij,e} \boldsymbol{\xi}_{ij}
    \label{eq:identity}
\end{equation}
where $w_{ij,e}$ represents the geometric weight associated with the edge $ij$ of element $e$ (different from the weight function $\underline{\omega}\left\langle \xi  \right\rangle$).
When the edge vectors $\{\boldsymbol{\xi}_{i1}, \boldsymbol{\xi}_{i2}, \boldsymbol{\xi}_{i3}\}$ are linearly independent (non-degenerate tetrahedron), they form a complete basis for $\mathbb{R}^3$. 
Consequently, the vector $\mathbf{K}_e^{-1} \sum_j \boldsymbol{\xi}_{ij}$ can be uniquely expressed as the linear combination $\sum_j w_{ij,e} \boldsymbol{\xi}_{ij}$. 
A practical solution in calculating the weights for each tetrahedron is provided in section 2 of the supplemental document.

Specifically, the weight vector $\mathbf{w}_{ie}$ for each tetrahedron vertex can be calculated as $\mathbf{w}_{ie} = \mathbf{G}_{ie}^{-1}\mathbf{1}$, where $\mathbf{G}_{ie}$ is the Gram matrix of vertex $i$ and $\mathbf{1} = [1, 1, 1]^T$.
Inserting Equation~\ref{eq:identity} into Equation~\ref{eq:f_pos} yields a more concise expression for the pairwise force between particles $i$ and $j$:
\begin{equation}
{\mathbf{f}_{ij}} = \underbrace{\sum\limits _{e \in \mathcal{T}_i \bigcap \mathcal{T}_j} w_{ij,e} \left(  V_e \mathbf{P}_e\mathbf{F}_e^{-1} \right)}_{{\mathbf{L}}_{ij} \in \mathbb{R}^{3\times3}} { {(\mathbf{y}_j - \mathbf{y}_i)}}.
\label{eq:f_ij}
\end{equation}
\highlightv2{Hereafter, ${\mathbf{L}}_{ij}$ is referred to as the \emph{pairwise tension tensor}, as it has the dimensions of energy per unit area.}
\highlightv2{As shown in Supplementary Material, Section~B, $w_{ij,e}=w_{ji,e}$; therefore, the pairwise force model in Equation~\ref{eq:f_ij} satisfies Newton's third law, i.e., $\mathbf{f}_{ij}=-\mathbf{f}_{ji}$.}

\begin{figure}[h]
	\centering
        \includegraphics[width=\linewidth]{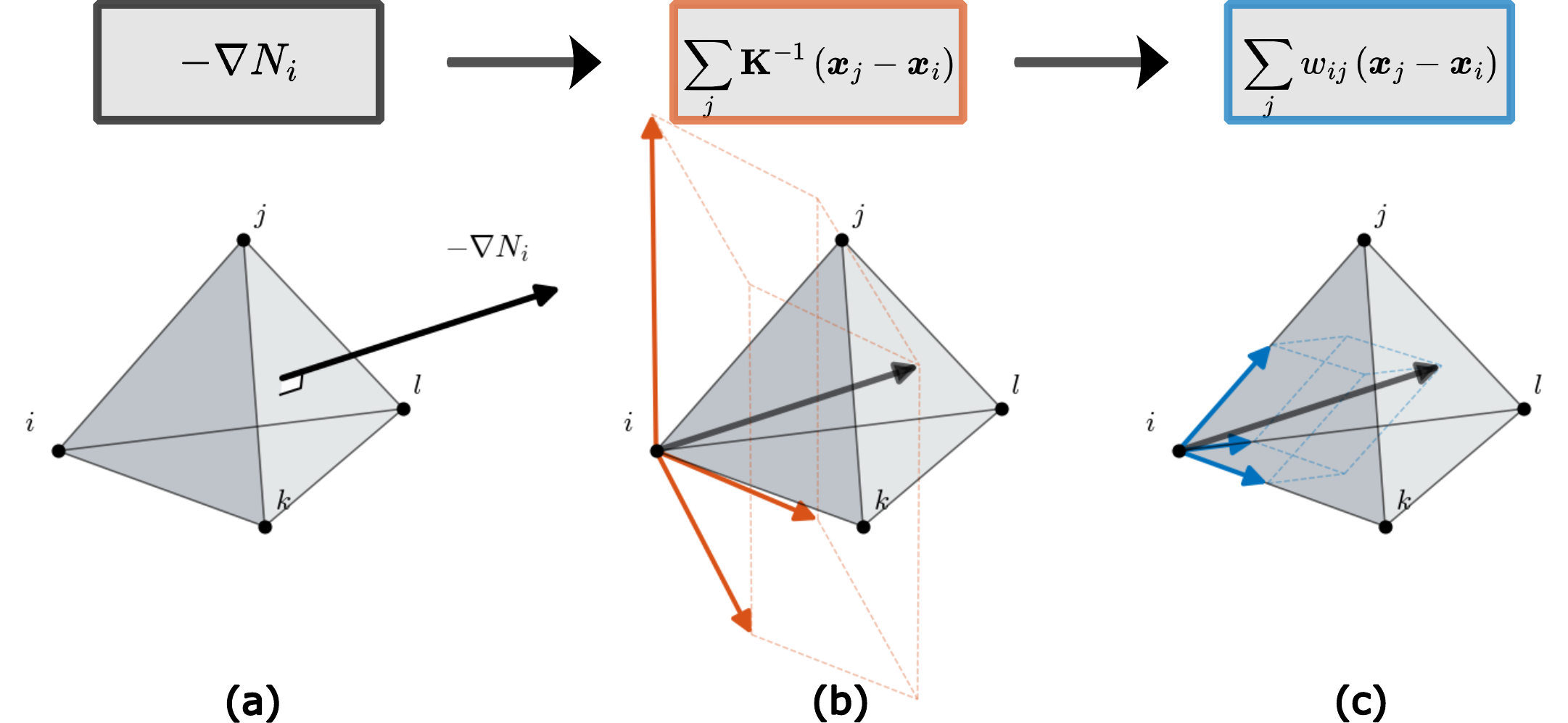}

	\caption{Perspectives on force modeling across different simulation frameworks. (a) the FEM perspective; (b) the peridynamic perspective; and (c) our proposed perspective.}
	\label{fig:control}
\end{figure}

The nonlocal pairwise force model described above offers several key advantages over classical continuum mechanics-based models, particularly when applied to hyperelastic materials undergoing discontinuous deformation or contact.
First, the structure of ${\mathbf{L}}_{ij}$ significantly simplifies the numerical modeling of arbitrary hyperelastic materials, as it involves only the first-order derivative of the energy density function with respect to the principal stretches of the deformation gradient.
\highlightv2{Second, discontinuous deformations involving contact and friction can be represented consistently by virtual tetrahedra, as detailed in Section~\ref{sec:contact}.}
Finally, the implementation of an implicit solver is highly compatible with modern GPU architectures.
This is because ${\mathbf{L}}_{ij}$ can be efficiently computed using only $\mathbf{P}_e,\mathbf{F}_e \in \mathbb{R}^{3\times3} $ and ${w}_{ij,e}\in \mathbb{R}$, without requiring the costly evaluation of $\partial \mathbf{P}/\partial \mathbf{F} \in \mathbb{R} ^{9\times9}$.
Moreover, symmetry and positive definiteness can be easily guaranteed via a semi-implicit decomposition and an analytical projection strategy, as demonstrated in Section~\ref{sec:SIPD}.

\section{Semi-Implicit Pairwise Descent}
\label{sec:SIPD}

Given both the internal and boundary forces in the nonlocal pairwise form, the position-based governing equation derived from Equation~\ref{eq:obj} and Equation~\ref{eq:f_ij} can be uniquely expressed as follows:
\begin{equation}
    \frac{m_i}{h^2} (\mathbf{y}_i - \mathbf{y}_i^*) - \sum_j \underbrace{\sum_{e \in \mathcal{T}_i \bigcap \mathcal{T}_j} w_{ij,e} \left(  V_e \mathbf{P}_e\mathbf{F}_e^{-1} \right)}_{\mathbf{L}_{ij}} { {(\mathbf{y}_j - \mathbf{y}_i)}} = 0.
    \label{eq:m_ij}
\end{equation}

\highlightv2{To solve the nonlinear system above, a Newton-type method linearizes the nonlinear term by introducing the $9\times9$ Hessian $\partial\mathbf{P}/\partial\mathbf{F}$, which is expensive to compute and difficult to project to a positive-definite matrix.}
\highlightv2{We instead solve the nonlinear system using the \textit{successive substitution method}~\cite{langtangen2019introduction}, a Jacobi-style iterative method that linearizes the nonlinear term by keeping it fixed at each iteration ($\mathbf{L} = \mathbf{L}^k$), without computing the Hessian.}

\highlightv2{In this Jacobi-style iteration, all $\mathbf{y}_j$ are fixed at $\mathbf{y}_j^k$ while computing the update $\Delta\mathbf{y}_i^{k+1}=\mathbf{y}_i^{k+1}-\mathbf{y}_i^k$ from the following linear system:}
\begin{equation}
    (\frac{m_i}{h^2} \mathbf{I} + \sum_j \mathbf{L}_{ij}^k) \Delta \mathbf{y}_i^{k+1} = -\overbrace{\bigg(\frac{m_i}{h^2}(\mathbf{y}_i^k-\mathbf{y}^{*}_i)-\sum_j\mathbf{L}^k_{ij}(\mathbf{y}_j^k-\mathbf{y}_i^k)\bigg)}^{\nabla \mathcal{L}_i(\mathbf{y}^k)}.
    \label{eq:Axb}
\end{equation}
Thus, the update $\Delta \mathbf{y}_i^{k+1}$ can be solved by $-(\frac{m_i}{h^2} \mathbf{I}+ \sum_j \mathbf{L}_{ij}^k)^{-1} \nabla \mathcal{L}_i(\mathbf{y}^k)$. 
Consequently, the position of each particle can be updated independently using only information from neighboring particles.
By bypassing the Hessian evaluation inherent to second-order techniques (such as VBD method), our first-order method incurs significantly lower computational and register overheads, making it ideal for parallel implementation.
\highlightv1{Furthermore, an analytical projection strategy is proposed to ensure convergence.}

\subsection{Positive Definiteness and Convergence}
\label{sec:APD}
Note when the $3\times 3$ matrix ${\mathbf{L}}_{ij}$ is positive semi-definite, the coefficient matrix ${\left(\frac{m_i}{h^2} \mathbf{I} + \sum_j \mathbf{L}_{ij}^k \right)^{ - 1}}$ is \highlightv1{accordingly} positive definite, we have 
\begin{equation}
\begin{array}{l}
\begin{aligned}
\nabla \mathcal{L}^k_i \cdot \left( {{{\mathbf{y}}^{k + 1}} - {{\mathbf{y}}^k}} \right) =  - \nabla \mathcal{L}^k_i \cdot {\bigg(\frac{m_i}{h^2} \mathbf{I} + \sum_j \mathbf{L}_{ij}^k \bigg)^{ - 1}}\nabla \mathcal{L}^k_i \leq 0.
 \end{aligned}
\end{array}
\end{equation}
The direction defined by $\Delta \mathbf{y}^{k+1} = \mathbf{y}^{k+1} - \mathbf{y}^k$ is clearly a descent direction whenever $\lVert \nabla \mathcal{L}^k \rVert \neq 0$. 

However, the positive definiteness of ${\mathbf{L}}_{ij}$ is not guaranteed under all conditions, particularly during compression. 
To address this issue, we present an extension of the semi-implicit successive substitution method (SISSM), originally introduced by Lu et al.~\shortcite{lu2023projective} for original peridynamics, to our reformulated FEM model. 
Its core idea is to decompose the coefficient matrix into positive and negative components (i.e., $\mathbf{L}_{ij} = \mathbf{L}_{ij}^+ + \mathbf{L}_{ij}^-$), treating them implicitly and explicitly, respectively. 
Specifically, the explicit treatment of the negative components means they are excluded from the coefficient matrix corresponding to the implicit variables on the left-hand side of Equation~\ref{eq:Axb}.

Considering the definiteness of the pairwise tension tensor for each finite element is jointly defined by the values of $w$ and ${\mathbf{P}}{\mathbf{F}}^{-1}$, our solution to decompose the pairwise tension tensor $\mathbf{L}$ is as follows:
\begin{equation}
\begin{array}{l}
\begin{aligned}
{\mathbf{L}_{ij,e}^ + } &= V_e  \left( {{w_{ij,e}^ + } {{({\mathbf{P}}{{{\mathbf{F}}}^{ - 1}})_e^+} }  + {w_{ij,e}^ - } {{( {\mathbf{P}}{{{\mathbf{F}}}^{ - 1}})}_e^ - }} \right) \\
{\mathbf{L}_{ij,e}^ - } &= V_e  \left( {{w_{ij,e}^ - } {{( {\mathbf{P}}{{ {\mathbf{F}}}^{ - 1}})}_e^ + }  + {w_{ij,e}^ + } {{( {\mathbf{P}}{{ {\mathbf{F}}}^{ - 1}})_e}^ - }} \right)
\end{aligned}
\end{array}
\label{eq:decompose}
\end{equation}
where the decomposition for $w$ is done as follows
\begin{equation}
{w^ + } = \left\{ {\begin{array}{*{20}{c}}
\begin{aligned}
{w,{\kern 5pt}}&{w > 0}\\
{0,{\kern 5pt}}&{w \le 0}
\end{aligned}
\end{array}} \right. , {\kern 10pt}{w^ - } = \left\{ {\begin{array}{*{20}{c}}
\begin{aligned}
{0,{\kern 5pt}}&{w > 0}\\
{w,{\kern 5pt}}&{w \le 0}
\end{aligned}
\end{array}} \right. ,
\end{equation}
and the $(\mathbf{P}\mathbf{F}^{-1})$ should be decomposed into a positive definite part and a negative definite part. This decomposition can be achieved by traditional eigenvalue projection methods based on singular value decomposition (SVD), such as the eigenvalue clamping projection strategy~\cite{teran2005robust} and the absolute eigenvalue projection strategy~\cite{chen2024stabler}.
Although applying these projection methods to our $3\times 3$ first Piola-Kirchhoff stress matrix $\mathbf{P}$ is much faster than applying them to a $9\times 9$ or $12 \times 12$ Hessian matrix, we still want to avoid the costly SVD.

\highlightv1{Thus, we propose an analytical projection strategy without SVD for isotropic hyperelastic materials, which is obtained by multiplying Equation~\ref{eq:iso} with $\mathbf{F}^{-1}$}:
\begin{equation}
\mathbf{P}\mathbf{F}^{-1} = 2\frac{\partial\Psi}{\partial I_1}\mathbf{I} + 2\frac{\partial\Psi}{\partial I_2}(I_1\mathbf{I}-\mathbf{F}^T\mathbf{F})+J\frac{\partial\Psi}{\partial J}(\mathbf{FF}^T)^{-1}.
\end{equation}
\highlightv1{The above equation can be decoupled into two components: the positive definite matrices including $\mathbf{I}$, $\mathbf{F}^T\mathbf{F}$ and $(\mathbf{FF}^T)^{-1}$ as well as the scalar values including $\frac{\partial\Psi}{\partial I_1}$, $\frac{\partial\Psi}{\partial I_2 }I_1$ and $J\frac{\partial\Psi}{\partial J}$, all of which can be efficiently computed without SVD.
The above observation allows us to easily partition $(\mathbf{P}\mathbf{F}^{-1})$ into a positive definite part $(\mathbf{P}\mathbf{F}^{-1})^{+}$ and a negative definite part $(\mathbf{P}\mathbf{F}^{-1})^{-}$ based on their values.}

\highlightv1{
For example, we present an decomposition for the stable neo-hookean energy (Equation (13) in ~\cite{Smith:2018:SNH}) in case of $J \ge 0 $ as follows:
\begin{equation}
\begin{aligned}
{\mathbf{P}\mathbf{F}^{-1}} &= \mu \mathbf{I}  - \mu J(\mathbf{F}\mathbf{F}^{T})^{-1} + \lambda (J - 1)J(\mathbf{F}\mathbf{F}^{T})^{-1}\\
(\mathbf{P}\mathbf{F}^{-1})^{+} &= \mu\mathbf{I} + \lambda J^2 (\mathbf{F}\mathbf{F}^{T})^{-1}\\
(\mathbf{P}\mathbf{F}^{-1})^{-} &= -(\mu + \lambda)J (\mathbf{F}\mathbf{F}^{T})^{-1}.
\end{aligned}
\label{eq:Ldecompose}
\end{equation}
}
\highlightv1{
For the deteriorated case where $J<0$, we can set $(\mathbf{PF}^{-1})^{+}=\mathbf{PF}^{-1}$ and $(\mathbf{PF}^{-1})^{-}=0$ to preserve positive definiteness.
}
\highlightv2{Therefore, the stable Neo-Hookean model~\cite{Smith:2018:SNH} remains well-defined for inverted and nearly singular configurations. 
In our practical implementation, when $\mathbf F$ becomes singular, we evaluate the gradient term $w_{ij}\mathbf P\mathbf F^{-1}(\mathbf y_j-\mathbf y_i)$ using its equivalent form $w_{ij}\mathbf P(\mathbf x_j-\mathbf x_i)$, thereby avoiding an explicit evaluation of $\mathbf F^{-1}$. When calculating the term $\mathbf L$, we use the following formula
\begin{equation}
(\mathbf F\mathbf F^T)^{-1}_{\epsilon}
:=\frac{\operatorname{adj}(\mathbf F\mathbf F^T)}{\max(J^2,\epsilon^2)},
\end{equation}
to prevent the pairwise coefficient matrix from becoming unbounded as $J\rightarrow0$.}

After inserting Equation~\ref{eq:decompose} into Equation~\ref{eq:Axb}, the update can be reformulated into $\Delta \mathbf{y}_i^{k+1} = -(\frac{m_i}{h^2} \mathbf{I} + \sum_j \mathbf{L}_{ij}^{k+})^{-1} \nabla \mathcal{L}_i(\mathbf{y}^k)$, with $\mathbf L_{ij}^{k-}$ treated explicitly on the right-hand side.
\highlightv2{As shown in Supplementary Material, Section~D, the analytical projection strategy makes every local update matrix positive definite and hence produces a descent direction. Combined with line search, it ensures convergence. Figures~\ref{fig:stretch} and~\ref{fig:APD} further demonstrate its stability and line-search behavior in practice.}
\highlightv2{The detailed SIPD procedure is provided in Algorithm~\ref{alg:sipd}.}
\section{Anisotropic Modeling for Contact and Friction}
\label{sec:contact}
To uniformly model contact and friction within our nonlocal pairwise model, we introduce virtual boundary elements (VBE) for proximity pairs whose distances
are below a given activation threshold $\hat{d}$, motivated by Wang et al.~\shortcite{Wang:2023:FGT} and M\"{u}ller et al.~\shortcite{Muller:2015:AMR}.

Without loss of generality, we consider a vertex-triangle proximity pair in Figure~\ref{fig:vbe}(a), where $(\mathbf{y}_0, \mathbf{y}_1, \mathbf{y}_2, \mathbf{y}_3)$ denote the vertex coordinates in the deformed configuration.
\highlightv2{The corresponding vertex coordinates in the reference configuration, where the contact and friction energies are zero, are defined as}
\begin{equation}
\begin{array}{l}
\begin{aligned}
{{\mathbf{x}}_0} &= {{\mathbf{y}}_0} - ({d_n} - \hat d){\mathbf{n}} - {d_t }\mathbf{t},
{\kern 10pt}
{{\mathbf{x}}_{1,2,3}} = {{\mathbf{y}}_{1,2,3}},
\end{aligned}
\end{array}
\end{equation}
where $d_n$ denotes the signed distance measuring the normal deformation along the normal vector $\mathbf{n}$, while $d_t$ denotes the unsigned distance measuring the magnitude of the shearing deformation along the tangential vector $\mathbf{t}$.
\highlightv2{To control the normal and tangential deformations independently, we construct two virtual tetrahedra, as shown in Figure~\ref{fig:vbe}(b) and Figure~\ref{fig:vbe}(c).}
\highlightv2{The deformation gradients and their inverses for the normal and tangential components are then formulated separately as}
\begin{equation}
\begin{array}{l}
\begin{aligned}
\mathbf{F}_n &= \mathbf{I} + (\frac{d_n}{\hat{d}} - 1)\mathbf{n}\mathbf{n}^T, {\kern 5pt} \mathbf{F}_n^{-1} = \mathbf{I} + (\frac{\hat{d}}{d_n} - 1)\mathbf{n}\mathbf{n}^T\\
\mathbf{F}_t &= \mathbf{I} + \frac{d_t}{\hat{d}} \mathbf{t}\mathbf{n}^T, {\kern 29pt} \mathbf{F}_t^{-1} = \mathbf{I} - \frac{d_t}{\hat{d}} \mathbf{t}\mathbf{n}^T
\end{aligned}
\end{array}
\end{equation}

\begin{figure}
    {\includegraphics[width=0.32\linewidth]{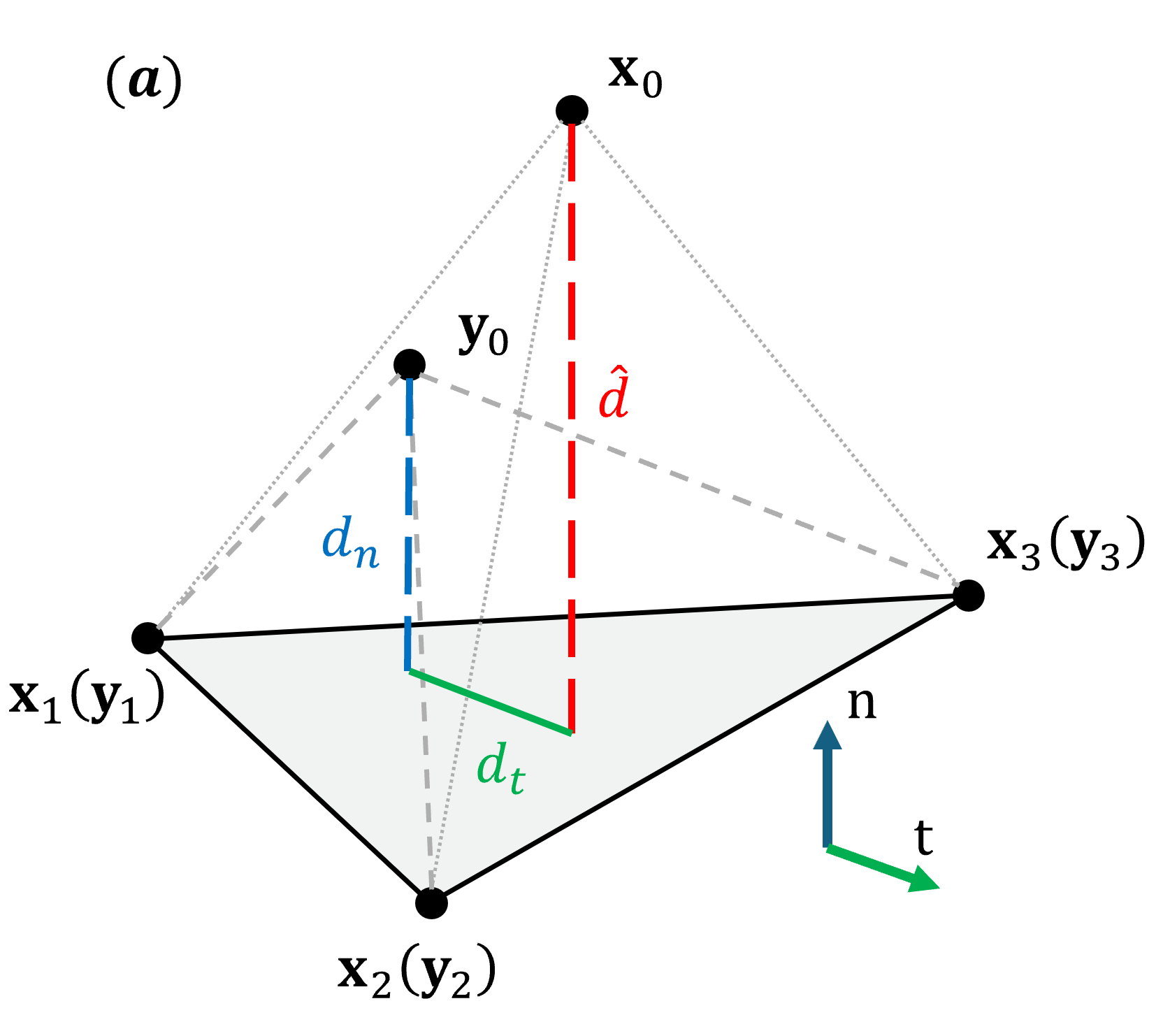}}
    {\includegraphics[width=0.32\linewidth]{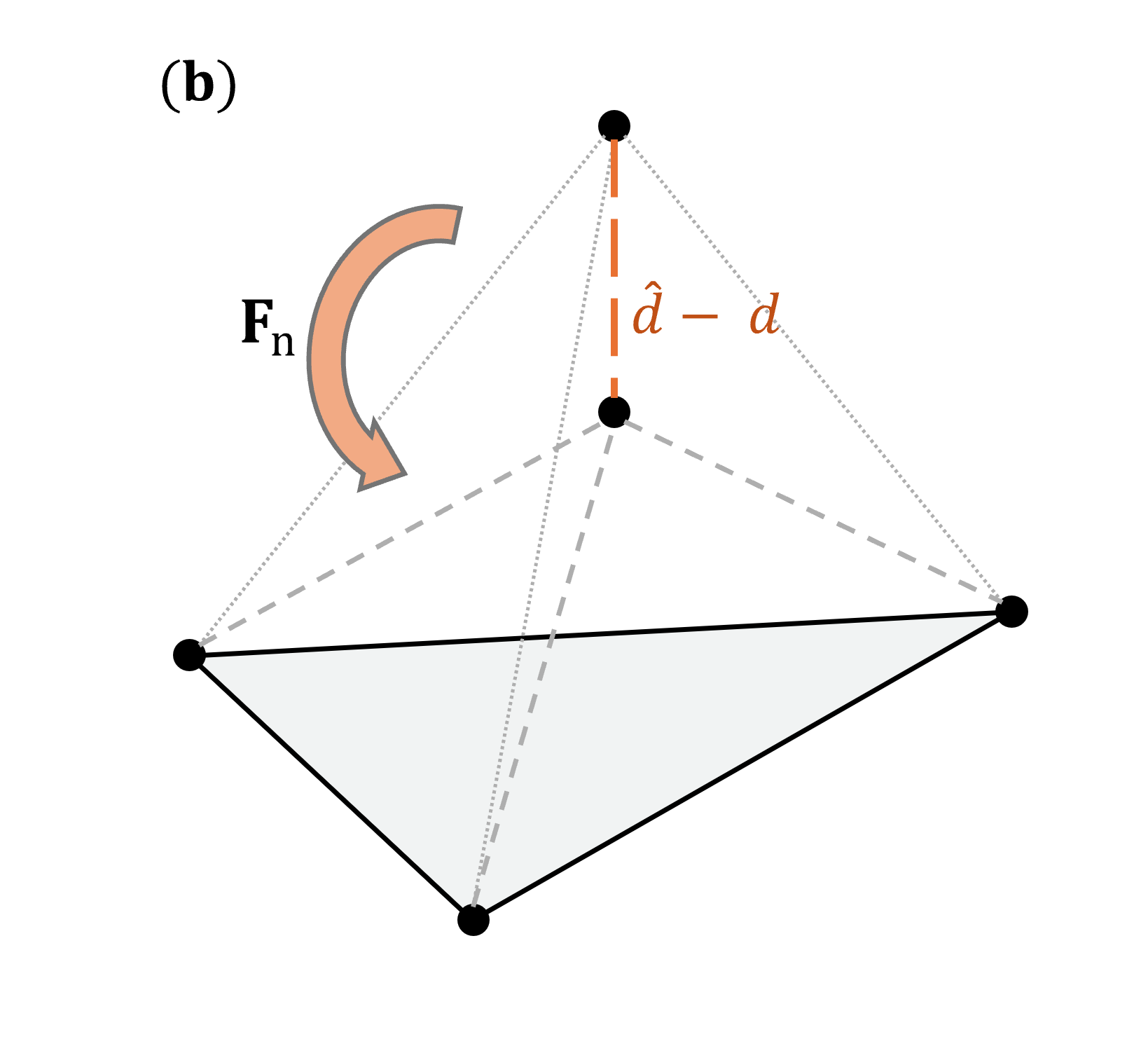}}
    {\includegraphics[width=0.32\linewidth]{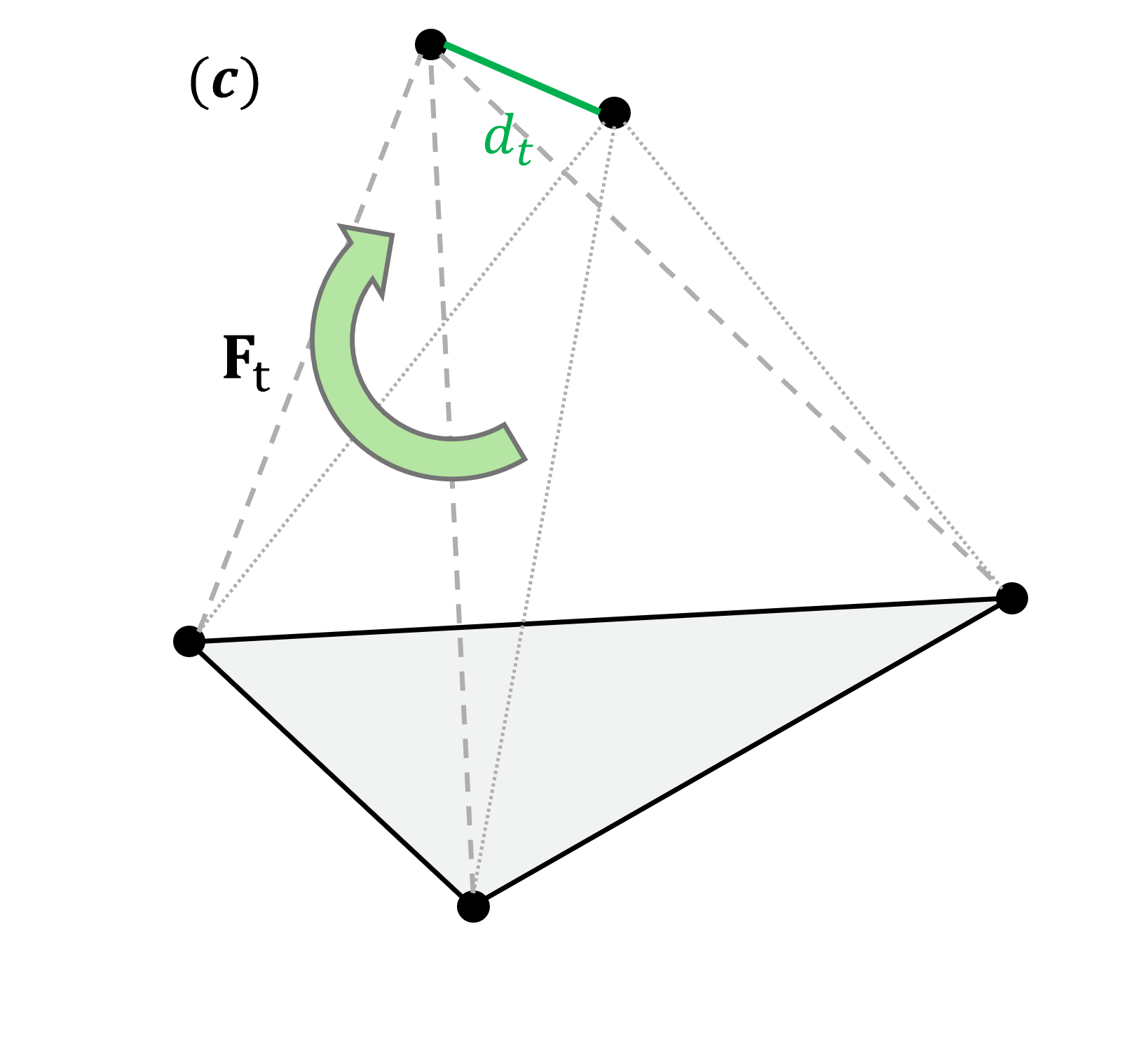}}
	\caption{Demonstration of virtual boundary elements for contact and friction. (a) the VBE in the reference configuration; (b) the VBE with normal deformation; (c) the VBE with shearing deformation.}
	\label{fig:vbe}
\end{figure}

To control the deformation along the normal and tangential directions separately, we use two separate energy density functions.
For the normal deformation, we apply the fourth invariant of the Cauchy-Green deformation tensor~\cite{Lin:2024:ARI} $I_4(\mathbf{n}) = ||\mathbf{F}\mathbf{n}||^2$, which is commonly used within fiber-reinforced materials~\cite{spencer1972deformations,wolper2020anisompm}, to define the energy density function:
\begin{equation}
{\Psi_{n}} = \frac{1}{2}\kappa C_n^2,{\kern 10pt}{C_n} = {\rm{sgn}}(J)\sqrt {||\mathbf{F}_n \mathbf{n}||^2}  - 1 \ge 0 ,
\end{equation}
where $\kappa$ denotes the stiffness, $\rm{sgn}(J)$ returns the sign of $J$.
For shearing deformation, we define the potential as a combination of the smoothed friction–velocity relation used in IPC~\cite{li2020incremental} and the fourth invariant of the corotated Cauchy–Green deformation tensor~\cite{Kugelstadt:2018:FCF}, defined as $I_4(\mathbf{n}) = ||(\mathbf{F} - \mathbf{R})\mathbf{n}||^2$.
For pure shearing deformation,  the potential can be further simplified by neglecting rotational effects and setting $\mathbf{R} = \mathbf{I}$, yielding the following expression:
\begin{equation}
\Psi_t =  \mu \lambda f(C_t), {\kern 10pt}C_t = {\sqrt {||(\mathbf{F}_t - \mathbf{I}) \mathbf{n}||^2} = 0}
\end{equation}
where $\lambda$ is the contact force magnitude, $\mu$ is the local friction coefficient, and \highlightv2{$f$ is a polynomial function defining a smooth displacement-based friction response.}
Taking derivative of both ${\Psi_{n}}$ and $\Psi_t$ with respect to $\mathbf{F}$ and multiplying both equations with $\mathbf{F}^{-1}$ yields the following first Piola-Kirchhoff stress tensor
\begin{equation}
(\mathbf{P}\mathbf{F}^{-1})_n = \kappa(1-\frac{\hat{d}}{d_n})\mathbf{n}\mathbf{n}^T, {\kern 5pt} (\mathbf{P}\mathbf{F}^{-1})_t = \mu \lambda f'(C_t)\mathbf{t}\mathbf{n}^T.
\end{equation}
\highlightv2{Finally, we apply analytical projection to both $(\mathbf{P}\mathbf{F}^{-1})_n$ and $(\mathbf{P}\mathbf{F}^{-1})_t$, whose formulations are}
\begin{equation}
\begin{array}{l}
\begin{aligned}
({\mathbf{P}}{{\mathbf{F}}^{ - 1}})_n^ +  &= \kappa {\mathbf{n}}{{\mathbf{n}}^T},{\kern 36pt}({\mathbf{P}}{{\mathbf{F}}^{ - 1}})_t^ +  = \mu \lambda {f}'({C_t})\frac{d_t}{\hat{d}}{\mathbf{I}},\\
({\mathbf{P}}{{\mathbf{F}}^{ - 1}})_n^ -  &= -\kappa \frac{\hat{d}}{d_n}{\mathbf{n}}{{\mathbf{n}}^T},{\kern 20pt}({\mathbf{P}}{{\mathbf{F}}^{ - 1}})_t^ -  = \mu \lambda {f}'({C_t})\left( {{\mathbf{t}}{{\mathbf{n}}^T} - \frac{d_t}{\hat{d}} \mathbf{I}} \right).
\end{aligned}
\end{array}
\end{equation}

For edge-edge (EE) collisions, the pairwise force is derived by replacing $\mathbf{n}$ with the normalized cross product of the two edge directions. 
A special case arises when the two edges are parallel; in this situation, the EE pair is discarded, as the resulting tetrahedron has zero volume and therefore does not contribute to boundary handling.
Compared with traditional collision-handling methods based on volume constraints~\cite{Muller:2015:AMR}, the anisotropic material model formulated using the fourth invariant ensures that collision responses and friction are strictly confined to the normal and tangential direction strictly, and do not introduce spurious tangential expansion in the material behavior.
\highlightv2{The scope and limitations of this smooth contact and friction formulation are discussed in Supplementary Material, Section~C.}

\begin{algorithm}[t]
\color{black}
\caption{Semi-Implicit Pairwise Descent (SIPD)}
\label{alg:sipd}
\footnotesize
\KwIn{Mesh $(\mathbf X,\mathcal T)$, surface $\mathcal S$, state $(\mathbf x^n,\mathbf v^n)$, time step $h$, iteration count $K$}
\KwOut{Updated state $(\mathbf x^{n+1},\mathbf v^{n+1})$}
$\mathbf y^* \leftarrow \mathrm{InertialPrediction}(\mathbf x^n,\mathbf v^n,h)$; $\mathbf y^0\leftarrow\mathbf y^*$\;
$\mathcal C\leftarrow\mathrm{DCD\text{-}BroadPhase}(\mathcal S,\mathbf x^n,\mathbf y^*)$\;
\For{$k\leftarrow0$ \KwTo $K-1$}{
  \If{$k\bmod4=0$}{
    $\mathcal A\leftarrow\mathrm{DCD\text{-}NarrowPhase}(\mathcal C,\mathbf y^k)$\;
    $\mathcal V\leftarrow\mathrm{BuildVBEs}(\mathcal A)$\;
  }
  $\mathbf g_i\leftarrow\frac{m_i}{h^2}(\mathbf y_i^k-\mathbf y_i^*)$; $\mathbf A_i\leftarrow\frac{m_i}{h^2}\mathbf I$\;
  \ForEach{$e\in\mathcal T\cup\mathcal V$ \textbf{in parallel}}{
    Compute $\mathbf F_e$, $\mathbf P_e\mathbf F_e^{-1}$, and its analytical split\;
    \ForEach{vertex pair $(i,j)$ of $e$}{
      Assemble $\mathbf L_{ij,e}$ into $\mathbf g_i$ and $\mathbf L^+_{ij,e}$ into $\mathbf A_i$\;
    }
  }
  $\Delta\mathbf y_i\leftarrow-\mathbf A_i^{-1}\mathbf g_i$ \tcp*[r]{independent $3\times3$ solves}
  $\alpha\leftarrow\mathrm{LineSearch}(\mathcal L,\mathbf y^k,\Delta\mathbf y)$; $\mathbf y^{k+1}\leftarrow\mathbf y^k+\alpha\Delta\mathbf y$\;
}
$\mathbf x^{n+1}\leftarrow\mathbf y^K$; $\mathbf v^{n+1}\leftarrow(\mathbf x^{n+1}-\mathbf x^n)/h$\;
\end{algorithm}
\raggedbottom
\section{Results and Discussion}

\begin{table*}[h]
\centering
\caption{Simulation parameters and performance metrics for various test scenarios.}
\label{tab:performance_results}
\resizebox{\textwidth}{!}
{
    \begin{tabular}{l rr ccc cc cc l}
    \toprule
    \multirow{2}{*}{{Example Name}} & \multicolumn{2}{c}{Number of} & \multicolumn{3}{c}{Material} & \multicolumn{2}{c}{Contact \& Friction} & \multicolumn{2}{c}{Solver Parameters} & \multicolumn{1}{c}{Time per step} \\
    \cmidrule(lr){2-3} \cmidrule(lr){4-6} \cmidrule(lr){7-8} \cmidrule(lr){9-10} \cmidrule(lr){11-11}
    & Vert. & Tet./Tri. & Type & $\rho$ & Stiffness & $\kappa_c$ & $\mu_c, \epsilon_v$ & $h$ & Iterations & avg/max \\
    \midrule
    \hyperref[fig:twist]{Twisting Beam} & 97,290 & 416,852 & Stable NeoHookean & 1000 & $E=1e8, \nu=0.48$ & $5 \cdot 10^3$ & $0.1; 10^{-2}$ & 3ms & 5 & 0.76ms/0.90ms \\
    \hyperref[fig:fall]{Dropping Bunny} & 225,316 & 1,034,600 & Stable NeoHookean & 1000 & $E=6e7, \nu=0.46$ & $5 \cdot 10^3$ & $0.1; 10^{-2}$ & 3ms & 4 & 2.8ms/3.1ms \\
    \hyperref[fig:cloth]{Cloth} & 9,100 & 17,590 & StVK & 1000 & $E=1e8, \nu=0.45$ & N/A & N/A & 1ms & 20 & 1.83ms/2.1ms \\
    \hyperref[fig:beamfall]{Beam Fall} & {463} & {1,479} & Stable NeoHookean & {1000} & $E \in \{4e5, 8e5,1.6e6, 8e6\}$ & {N/A} & {N/A} & {1ms} & {10/25/50} & avg: 0.3ms/0.7ms/1.3ms \\
    & & & & & $\nu=0.45$ & & & &  & max:0.3ms/0.8ms/2.1ms \\
    \hyperref[fig:friction]{Sliding Cube} & 24 & 30 & Stable NeoHookean & 1000 & $E=1e7, \nu=0.45$ & $5 \cdot 10^3$ & $\{0.0, 0.2, 0.4, 0.6, 0.8\}; 10^{-2}$ & 3ms & 3 & 0.02ms/0.03ms \\
    \hyperref[fig:noodle]{Noodle} & 697,005 & 1,849,230 & Stable NeoHookean & 100 & $E=1e8, \nu=0.46$ & $5 \cdot 10^3$ & $0.1; 10^{-2}$ & 3ms & 4 & 6.5ms/7.9ms \\
    \hyperref[fig:push]{Pushing bunny} & 32,188 & 147,800 & Stable NeoHookean & 1000 & $E=1e8, \nu=0.46$ & $10^4$ & $0.1; 10^{-2}$ & 3ms & 7 & 0.78ms/0.91ms \\
    \hyperref[fig:squishy]{Dropping Squishy Ball} & 167,445 & 478,500 & Stable NeoHookean & 1000 & $E=1e8, \nu=0.46$ & $10^4$ & $0.1; 10^{-2}$ & 3ms & 4 & 0.95ms/1.13ms \\
    \hyperref[fig:stretch]{Stretching Cube} & 64,000 & 355,914 & Stable NeoHookean & 1000 & $E=1e8, \nu=\{0.1,0.2,0.46,0.48,0.49,0.499\},$ & N/A & N/A & 3ms & 40 & 3.2ms/3.4ms \\
    \hyperref[fig:shape]{Bunny recover} & 3,219 & 14,308 & Corotated linear & 1000 & $E=1e8, \nu=0.45$ & N/A & N/A & 30ms & 20 & 0.28ms/0.32ms \\  
    \hyperref[fig:shape]{Octopus recover} & 3,430 & 13,578 & Stable NeoHookean & 1000 & $E=1e8, \nu=0.46$ & N/A & N/A & 3ms & 6 & 0.17ms/0.19ms \\ 
    \hyperref[fig:shape]{Armadillo recover} & 15,228 & 62,770 & Stable NeoHookean & 1000 & $E=1e8, \nu = 0.45$ & N/A & N/A & 3ms & 20 & 1.52ms/1.68ms \\    
    \bottomrule
    \end{tabular}
}
\end{table*}

\begin{table}[t]
\centering
\caption{Comparison of descent directions and required derivative information.}
\label{tab:compact_comparison_colored}
\footnotesize 
\begin{tabular}{l l l}
\toprule
\textbf{Method} & \textbf{Descent Direction} $\Delta \mathbf{y}_i$ & \textbf{Derivative Information} \\
\midrule
GD & $-\nabla \mathcal{L}_i$ & First order \\
Newton & $-[(\mathbf{M}/ h^2 + \mathbf{H})^{-1}\nabla \mathcal{L}]_i$ & Full Hessian $\mathbf{H}$ \\
VBD & $-(\mathbf{M}_{ii}/h^2 + \mathbf{H}_{ii})^{-1}\nabla \mathcal{L}_i$ & Vertex Hessian block $\mathbf{H}_{ii}$ \\
Jacobi-Pre. & $-(\mathbf{M}_{ii}/h^2 + \mathrm{diag}\{\mathbf{H}_{ii}\})^{-1}\nabla \mathcal{L}_i$ & Diagonal Hessian \\
\textbf{Ours} & $-(\mathbf{M}_{ii}/h^2 
+ \mathbf{L}^+_{ii})^{-1}\nabla \mathcal{L}_i$ & First-order $\mathbf{P},\mathbf{F}$ \\
\bottomrule
\end{tabular}
\end{table}

\highlightv2{Table~\ref{tab:compact_comparison_colored} summarizes the derivative information required by each descent direction. Newton, VBD, and Jacobi preconditioning construct full, block, or diagonal Hessian information from the constitutive derivative $\partial\mathbf{P}/\partial\mathbf{F}$, respectively, whereas our method constructs $\mathbf{L}_{ii}^{+}$ directly from the first-order quantities $\mathbf{P}$ and $\mathbf{F}$. Their measured convergence with respect to iteration count and wall-clock time is reported in Figure~\ref{fig:convergence}.}
\newline
We implement our solver and other comparison algorithms in CUDA on an NVIDIA RTX 5090 GPU using single-precision arithmetic. 
We adopt the stable NeoHookean model (Equation~\ref{eq:Ldecompose}) for both volumetric solids and cloth. 
Due to our method's low register usage (under 56 registers per thread), we maintain high GPU occupancy and use a thread block size of 256 for all experiments to maximize throughput. 
We set $\hat{d}=0.03$ for all collision \highlightv3{scenes}.

\subsection{Collision Handling}
We employ Discrete Collision Detection for collisions and friction. We restrict detection to the surface triangle meshes of our tetrahedral models. Each time step begins with a broad-phase filter to identify collision candidates. \highlightv1{During solver iterations, narrow-phase detection extracts active collision pairs to compute contact normals and penetration depths every 4 iterations.} We handle both Vertex-Face (V-F) and Edge-Edge (E-E) collisions, filtering the degenerate coplanar E-E cases. 

Our formulation treats collision pairs as "virtual tetrahedra," allowing them to be stored in the same element list as elastic tetrahedra. 
This enables a single CUDA kernel to process elasticity, collision, and friction by only varying the $\mathbf{PF}^{-1}$ matrix. 
Figures~\ref{fig:fall}, Figures~\ref{fig:noodle} and Figures~\ref{fig:twist} demonstrate the robustness and efficiency of this unified treatment.

\subsection{Positive definiteness}
\label{sec:PD}
\highlightv1{To validate our analytical projection strategy for SIPD, we test scenarios with high Poisson's ratios, where non-positive definiteness issues are more pronounced~\cite{chen2024stabler}.
As illustrated in Figure~\ref{fig:stretch}, under conditions of a high Poisson's ratio ($\nu > 0.48$) and large deformations, the non-positive definiteness problem becomes severe.
Directly using a non-positive definite matrix for iterations leads to divergence, as seen in the VBD method without positive definiteness enforcement, which fails in $\nu > 0.48$.
In contrast, our method maintains stability throughout the simulation.}

\highlightv2{Moreover, we evaluate our analytical projection strategy under stretching, compression, and bending with $\nu=0.499$, using the eigenvalue clamping and absolute eigenvalue projection strategies as baselines (Figure~\ref{fig:APD}). Our analytical projection accepts the full step $\alpha=1$ in most iterations across all three deformation modes and therefore requires little line-search backtracking.}

\subsection{Stress Test}
\highlightv1{We test our method under three extreme deformation scenarios to show the stability of SIPD in Figure~\ref{fig:shape}. 
The first one is a bunny model with randomly initialized positions around the surface of a sphere.
The second one is a highly flattened armadillo. 
The third is \highlightv3{an} octopus model with its arms strongly stretched.}

\highlightv1{Figure~\ref{fig:twist} is a challenging frictional contact case where two beams are twisted together. 
\highlightv2{Figure~\ref{fig:noodle} is a large-scale collision scenario in which 81 long noodles are dropped into a bowl, producing over one million contact pairs, substantially more than the 697K vertex degrees of freedom.}
Our method passes these stress tests without any stability issues, demonstrating the robustness of our SIPD method under extreme conditions.}

\subsection{Performance}
\label{sec:pfance}
\highlightv1{In this section, we evaluate our method's advantage by analyzing register usage, which critically determines GPU parallelism.
High register usage severely limits the number of active thread blocks on a Streaming Multiprocessor (SM), reducing GPU occupancy and its ability to hide memory latency.
For instance, within the bottleneck kernel for computing the gradient and descent direction, even the efficient VBD method requires 102 registers per thread (with the dPdF computation alone taking at least 81) and an occupancy of 33.3\%.
Conversely, our approach demands only 56, giving an occupancy of 66.7\%.
This substantial reduction in register usage directly translates to higher instruction throughput and better hardware utilization.
As a result, Figure~\ref{fig:TimeComp} demonstrates that the elasticity solver times of Our-GS and Our-Jacobi are only 67\% and 60\% of those of VBD and BlockJacobi, respectively.}
\begin{figure}[t]
\centering
\includegraphics[width=\linewidth]{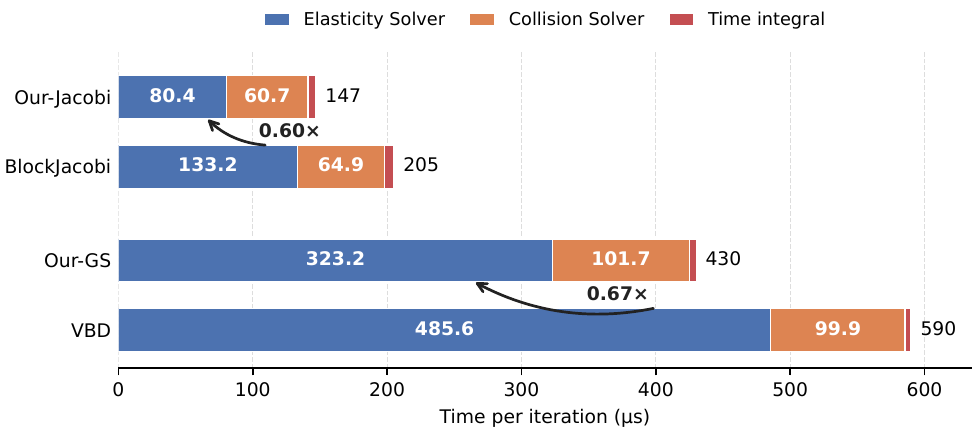}
\caption{Time overhead comparison in the Twisting Beam scene.}
\label{fig:TimeComp}
\end{figure}

\subsection{Comparisons with other descent methods}
By viewing the descent direction as $\Delta \mathbf{y}_i = -\mathbf{A} \nabla \mathcal{L}$, our solver effectively constructs a Hessian-free preconditioning matrix $\mathbf{A}$, and the traditional methods approximate the Hessian $\mathbf{H}$ directly.
While the Hessian matrix improves convergence in Newton-type methods, it introduces the heavy computational burden of fourth-order tensors $\partial \mathbf{P} / \partial \mathbf{F} \in \mathbb{R}^{9 \times 9}$ and positive definite projection steps. 
This limitation affects even efficient vertex-wise local descent approaches like VBD.
Our method, instead, derives the preconditioning matrix from second-order tensor $\mathbf{PF}^{-1} \in \mathbb{R}^{3 \times 3}$ without Hessian, and provides an SVD-free positive definite projection strategy. 
Although the omission of the original Hessian information degrades the per-iteration convergence rate, our method involves only second-order tensor computations and is completely SVD-free.
This significantly reduces the per-iteration overhead, achieving a highly effective trade-off that minimizes the total convergence time (\textbf{1.6x} faster than VBD).

\highlightv1{To prove this, we compare our method with several mainstream methods in table~\ref{tab:compact_comparison_colored}: 
Gradient Descent (GD) uses the identity matrix as the preconditioning matrix.
Newton-type Methods (e.g., full Newton~\cite{sin2013vega}, VBD, BlockJacobi~\cite{chen2024vertex}, Jacobi Preconditioning~\cite{wang2016descent}) involve constructing the global Hessian $\mathbf{H}$ or its block-diagonal counterparts $\mathbf{H}_{ii}$. 
To exclude the influence of graph coloring on convergence rates and execution time, we implement both our method and VBD in the Coloring Gauss-Seidel (GS) and Jacobi frameworks: Our-GS, Our-Jacobi, VBD and BlockJacobi.
Then, we compare the convergence behavior of these methods in Figure~\ref{fig:convergence} where a large compressed model is released.
As illustrated in Figure~\ref{fig:convergence}, although our methods (Our-GS and Our-Jacobi) exhibit a slower per-iteration convergence rate (left figure) compared to the VBD and BlockJacobi methods, they achieve a substantially shorter overall convergence time (right figure) due to a 33\%-40\% reduction in computational overhead brought by small register usage in Section~\ref{sec:pfance}.
}

\highlightv1{Furthermore, as noted by~\cite{chen2024vertex}, BlockJacobi (Jacobi framework) incurs significant line search overhead, resulting in slower overall performance compared to VBD (GS framework). 
Conversely, our method is highly compatible with the Jacobi framework. 
Our projection strategy requires markedly fewer line search iterations (see Figure~\ref{fig:APD}), enabling Our-Jacobi to outperform Our-GS.
}

\subsection{Comparisons with Peridynamics}
\highlightv1{
The mechanical response in traditional peridynamics relies heavily on the spatial distribution of the horizon, resulting in computational inaccuracies at physical boundaries caused by horizon truncation.
Our approach reformulates FEM using non-local concepts of peridynamics, achieving theoretical equivalence to standard FEM and fundamentally eliminating these boundary issues.}

\highlightv1{To verify this, we simulate a tensile test on a square plate with a central circular hole (dimensions: $10\text{cm}\times10\text{cm}$, discretization spacing $dx=0.185\text{cm}$, hole radius: $1\text{cm}$) in Figure~\ref{fig:hole}. 
The top and bottom two layers of particles were designated as traction boundaries, subject to an outward tensile velocity of $10\text{cm/s}$.
\highlightv2{We compare the velocity fields of three methods: traditional peridynamics, our SIPD, and standard FEM. For this setup, the peridynamics horizon size is set to $2\Delta x$ and the resulting system is solved using the method of~\cite{lu2023projective}; the FEM system is solved using a Newton method. Starting from the same initial state, traditional peridynamics and SIPD are each run for 100 iterations, while the Newton-based FEM solver is run for 20 iterations.}
The result shows that peridynamics exhibits non-physical discontinuities near the hole due to the aforementioned horizon truncation, whereas our method remains smooth.
Disregarding minor floating-point error, our method is computationally equivalent to the standard FEM.}

\section{Conclusion}
In summary, the proposed semi-implicit pairwise descent offers an efficient and robust solution to nonlinear optimization problems in physics-based simulation. 
By avoiding second-order derivatives and relying solely on first-order energy gradients, the method reduces computational complexity while remaining well suited for GPU parallelization. 
The reformulation within a nonlocal peridynamics framework enables a unified treatment of internal and boundary forces. The semi-implicit splitting and analytical projection strategy guarantees positive definiteness of the local stiffness matrices. 
As a result, the method supports stable, parallel particle updates with minimal overhead, making it particularly attractive for large-scale, real-time elastic simulations.

Despite the promising results, our approach has several limitations. 
Our contact and friction model remains dependent on mesh resolution, which may lead to spurious contact forces. 
\highlightv2{Although we revisit FEM from a nonlocal perspective, the current method does not yet support fracture simulation.}
\highlightv2{Our current penalty-based contact formulation does not strictly prevent penetration. To improve robustness in more challenging simulations involving extreme deformation and complex contact, a promising direction is to combine SIPD with IPC~\cite{li2020incremental}, using its barrier formulation and continuous collision detection to enforce nonpenetration while retaining SIPD's Hessian-free parallel updates. Developing a compatible formulation for IPC contact and friction without sacrificing GPU efficiency is left for future work.}

\begin{acks}
We thank the anonymous reviewers for their valuable comments and suggestions. This work was supported by the New Generation Artificial Intelligence--National Science and Technology Major Project of China (No.~2025ZD0123902), the National Natural Science Foundation of China (Nos.~92570206 and 62302490), and the Basic Research Project of ISCAS (No.~ISCAS-JCMS-202403).
\end{acks}
\bibliographystyle{ACM-Reference-Format}
\bibliography{sample-base}

\appendix

\clearpage
\begin{figure*}
	\centering
    \includegraphics[width=0.85\linewidth]{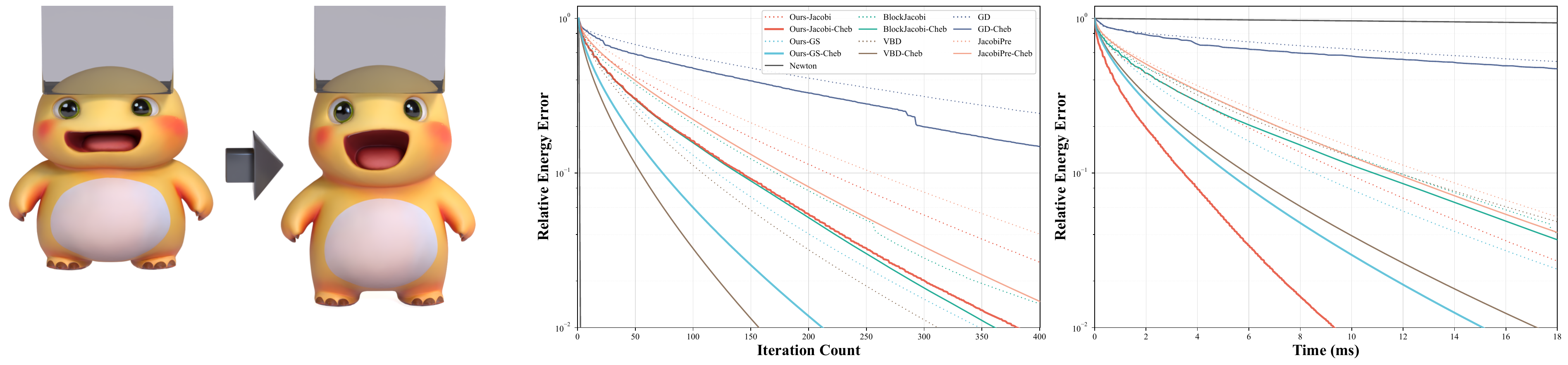}
	\caption{\textbf{Comparison of various descent methods}. We evaluate algorithm performance using the Baby Dragon model (9k vertices, 38k tetrahedra). The model's top vertices are fixed within a glass block, and an initial compression is applied by pushing the foot vertices upward (left). The positional constraints on the feet are then instantaneously removed to simulate deformation over a single time step of $h = 33$ ms (right). The plots illustrate the relative energy error as a function of iteration count and computation time. The relative energy error is calculated by$\frac{|\mathcal{L}^k - \mathcal{L}^*|}{|\mathcal{L}^0 - \mathcal{L}^*|}$, where $\mathcal{L}^*$ is calculated by the full Newton method. All methods are implemented in a unified GPU framework, with the Chebyshev acceleration parameter set to $\rho = 0.95$.}
	\label{fig:convergence}
\end{figure*}
\begin{figure*}
    \includegraphics[width=0.204\linewidth]{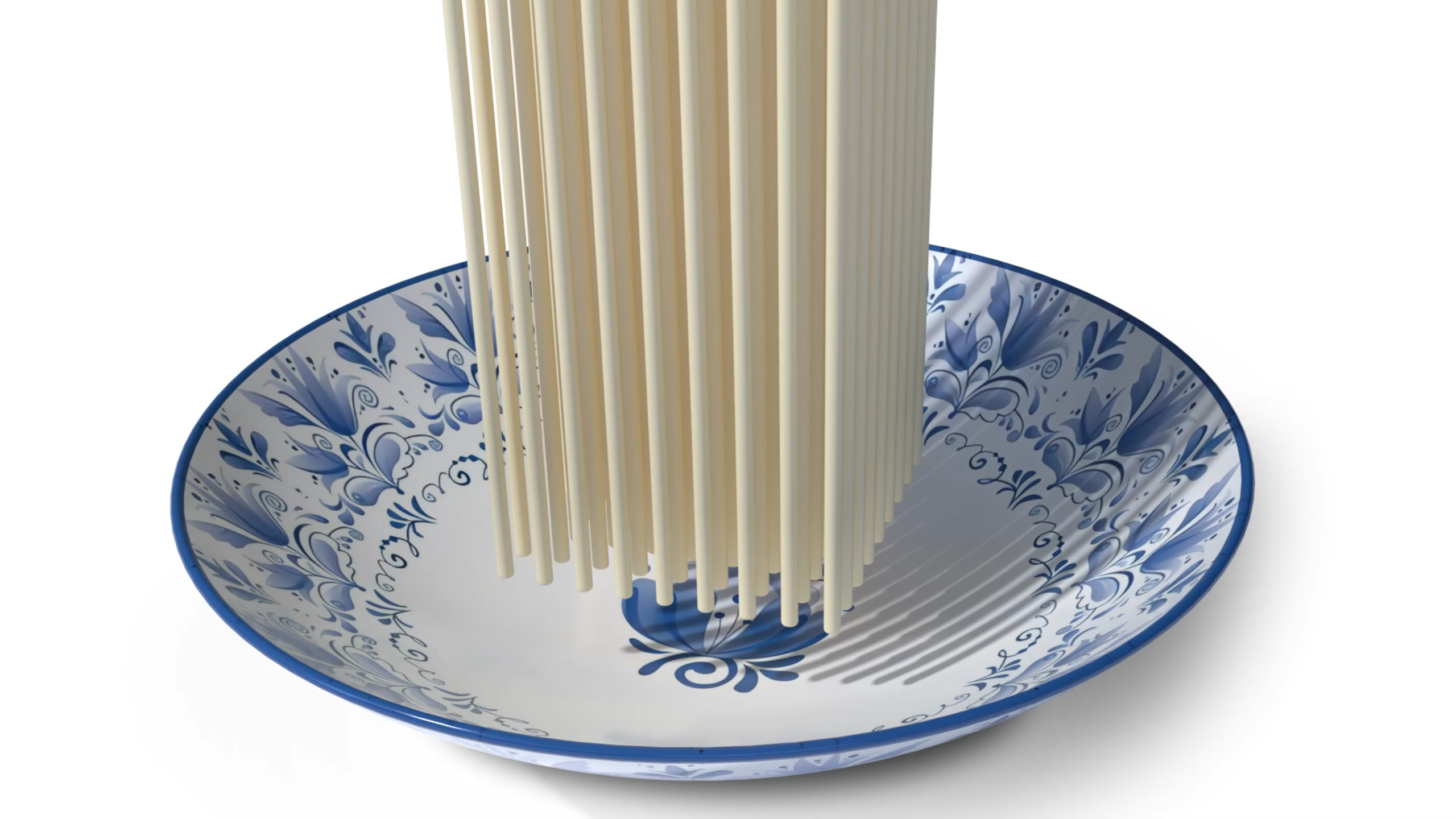}
    \includegraphics[width=0.204\linewidth]{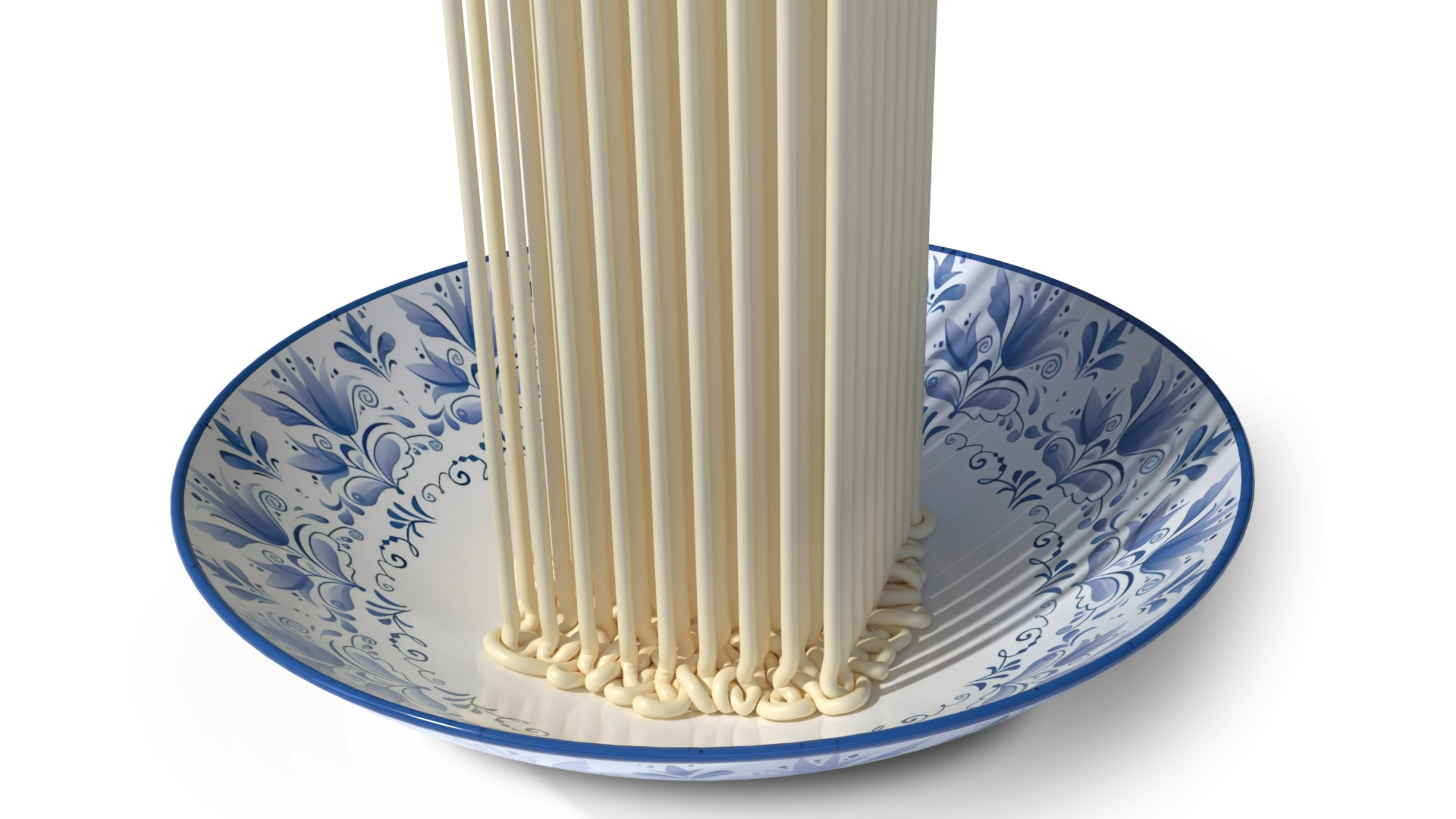}
    \includegraphics[width=0.204\linewidth]{images/noodle/noodle06.png}
    \includegraphics[width=0.204\linewidth]{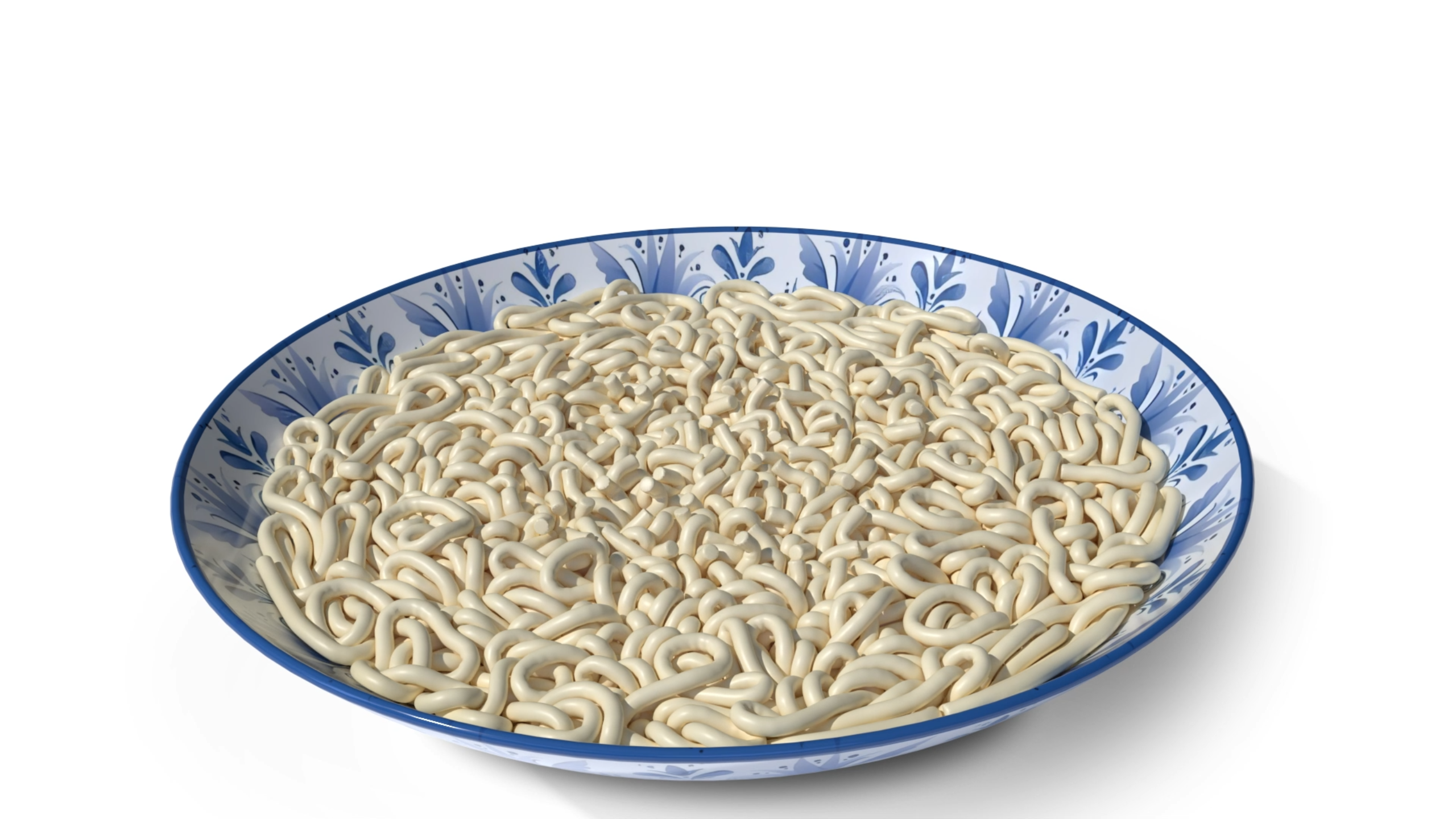}
	\caption{\textbf{Noodle Simulation}. 81 long noodles (1.8M tetrahedra) dropping into a porcelain bowl. Our animation is plausible and free of any collision artifact, with an average cost of 14.3ms per step ($h = 3ms$).}
	\label{fig:noodle}
\end{figure*}
\begin{figure*}
\includegraphics[width=0.68\linewidth]{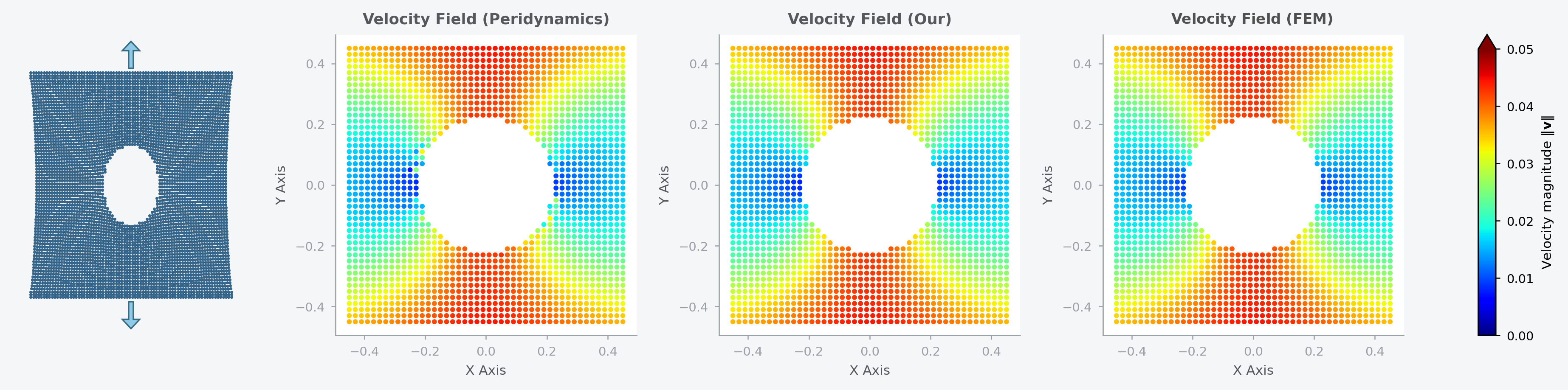}
\caption{\textbf{Open-hole Tension Test}. We compare the velocity \highlightv3{fields} of three \highlightv3{methods}: peridynamics, \highlightv3{our method}, and \highlightv3{standard} FEM. Peridynamics presents anomalous discontinuities around the hole boundary, an issue entirely absent in both our method and FEM.}
\label{fig:hole}
\end{figure*}
\begin{figure*}
	\includegraphics[width=0.638\linewidth]{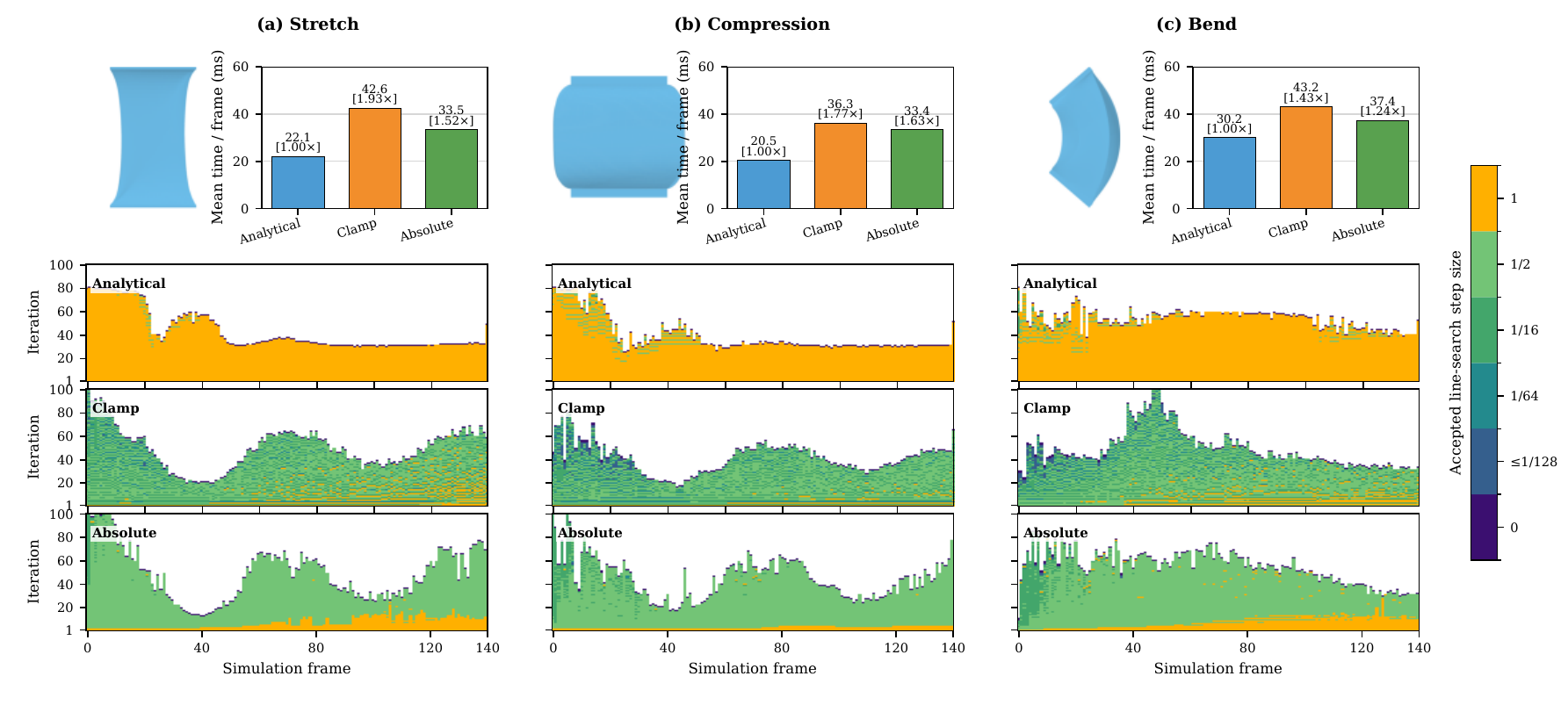}
	\caption{\highlightv2{\textbf{Comparison of three positive-definitization strategies over complete simulations} with $\nu=0.499$: (a) stretching, (b) compression, and (c) bending. The top row shows the deformation and mean valid time per frame, while the heatmaps below show the accepted step size over frame and iteration. All runs terminate at a relative energy error below $10^{-4}$ or after 100 iterations.}}
	\label{fig:APD}
\end{figure*}

\begin{figure}[t]
    \centering
    \includegraphics[width=0.82\linewidth]{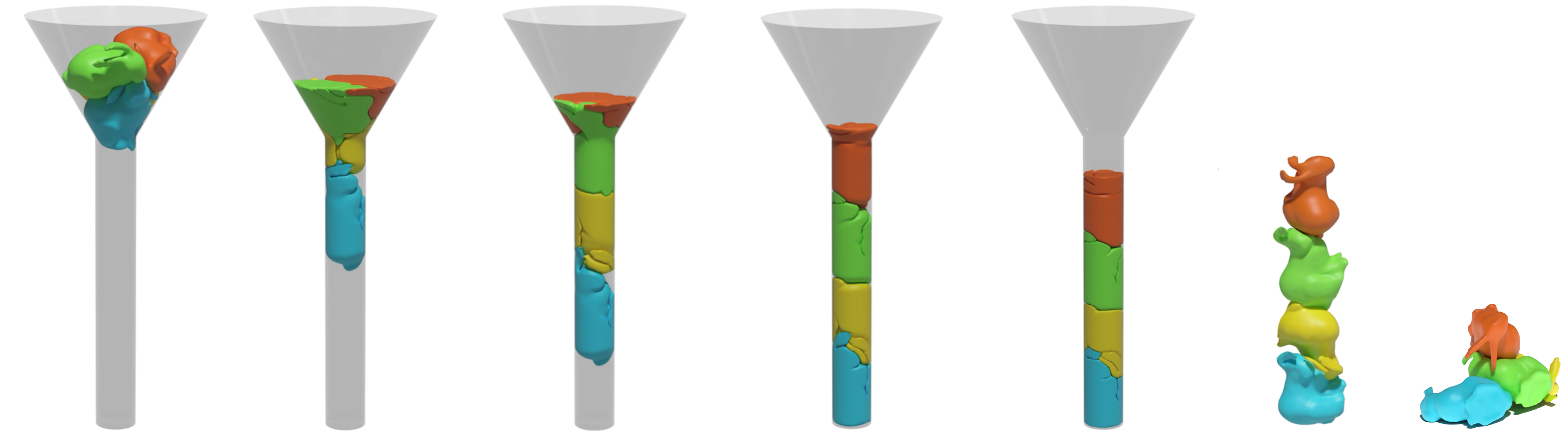}
    \caption{\textbf{Pushing Bunny}.4 bunnies are compressed into a narrow cylindrical cavity, followed by a sudden removal of the confinement.}
    \label{fig:push}
\end{figure}

\begin{figure}[t]
    \centering
    \includegraphics[width=0.32\linewidth]{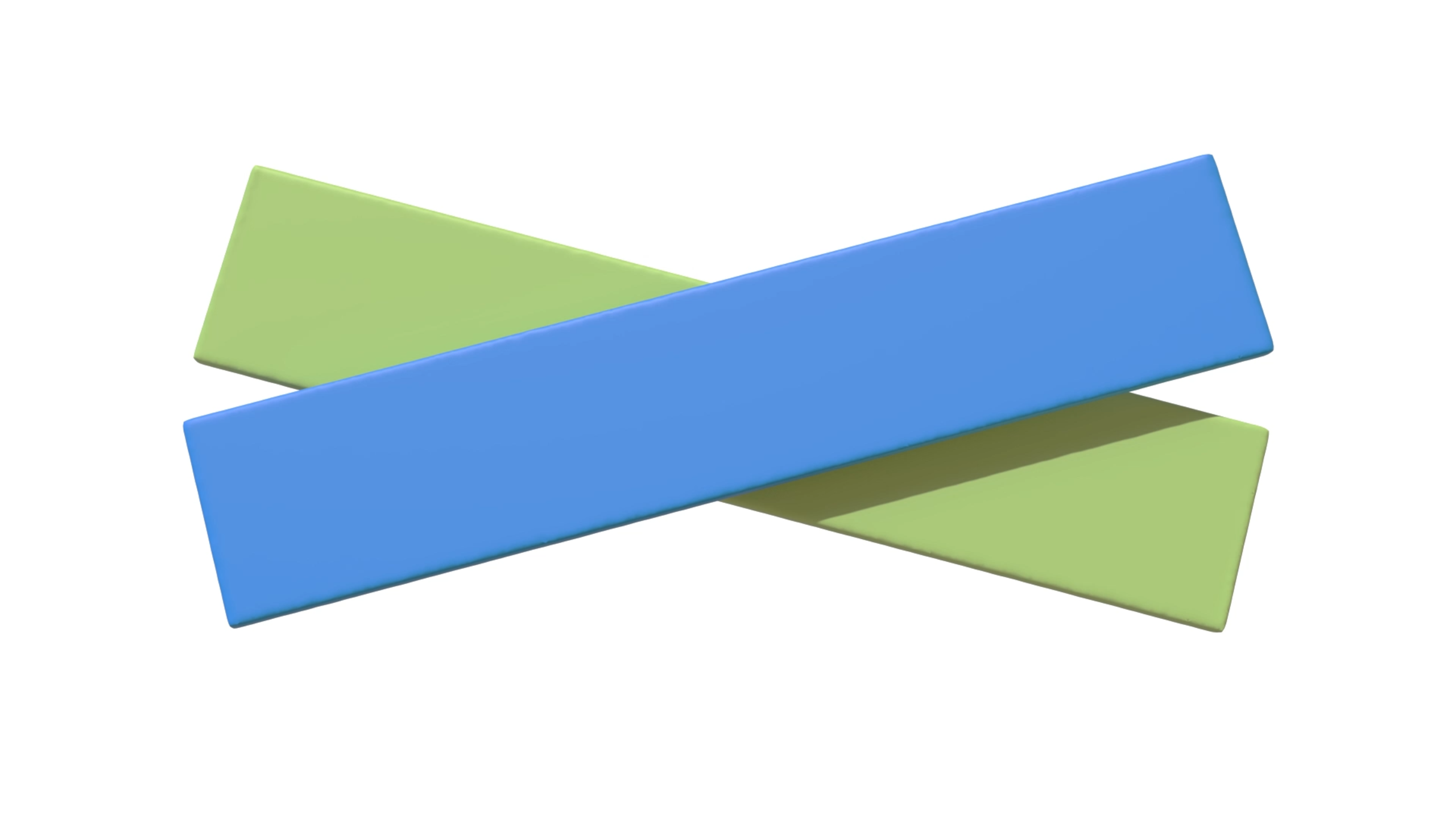}
    \includegraphics[width=0.32\linewidth]{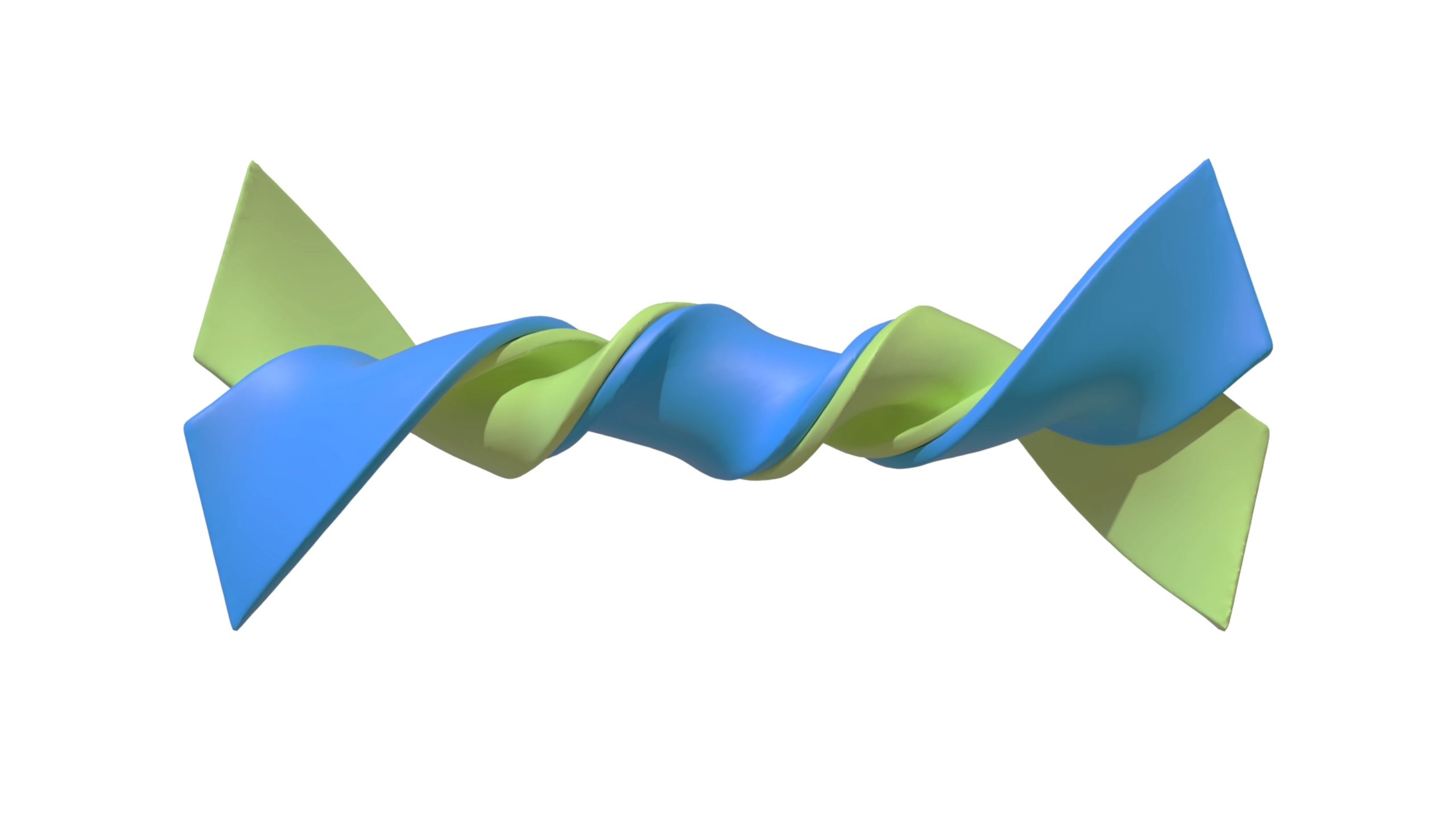}
    \includegraphics[width=0.32\linewidth]{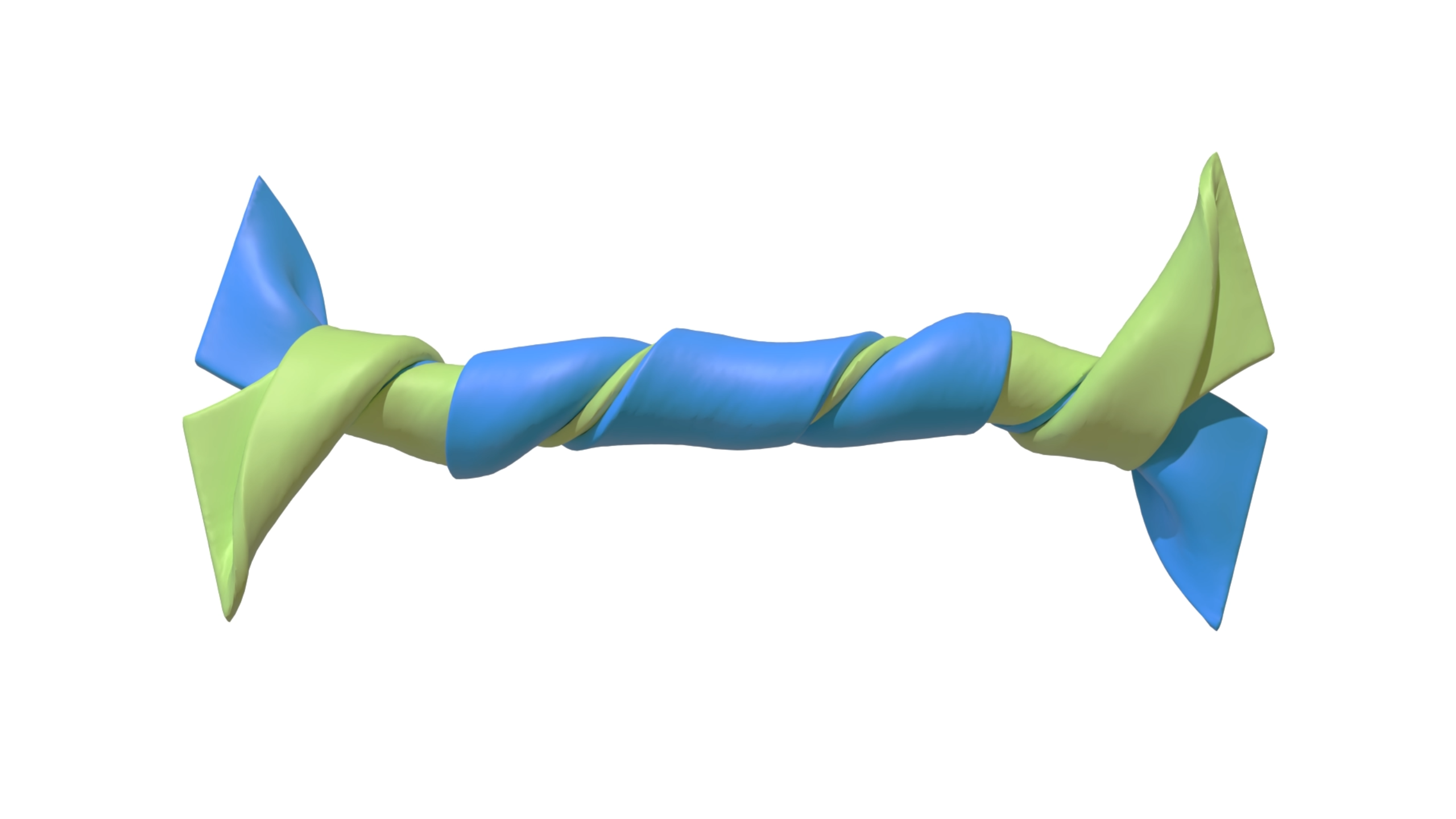}
    \caption{\textbf{Twisting Beam}. Two beams (97K vertex, 416K tetrahedra) intertwined under complex frictional contact and buckling.Our solver delivers artifact-free results at a near real-time rate, averaging 3.5 ms per step ( $h = 3 ms$ )}
    \label{fig:twist}
\end{figure}

\begin{figure}
	\centering
    \includegraphics[width=0.9\linewidth]{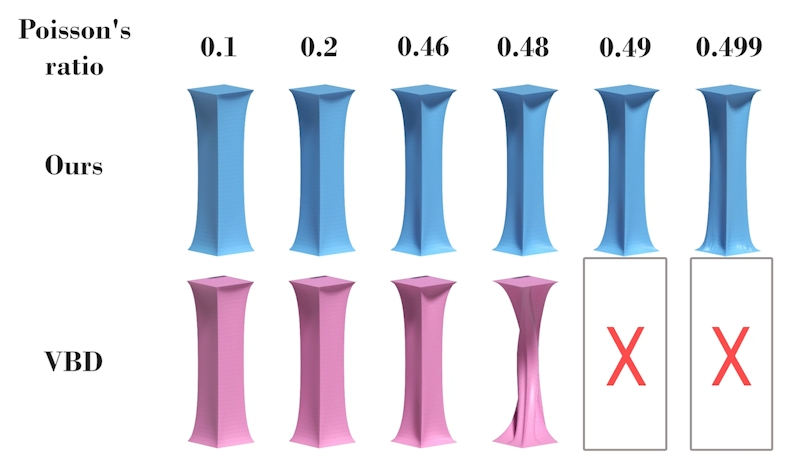}
	\caption{\textbf{Stretching Cube}. A cube (64K vertex, 355K tetrahedra) is stretched under a varying Poisson's ratio. Our solver delivers artifact-free results from $\nu = 0.1$ to $\nu = 0.499$, while the VBD solver fails when $\nu > 0.48$.}
	\label{fig:stretch}
\end{figure}

\begin{figure}[htbp]
	\centering
    \includegraphics[width=0.75\linewidth]{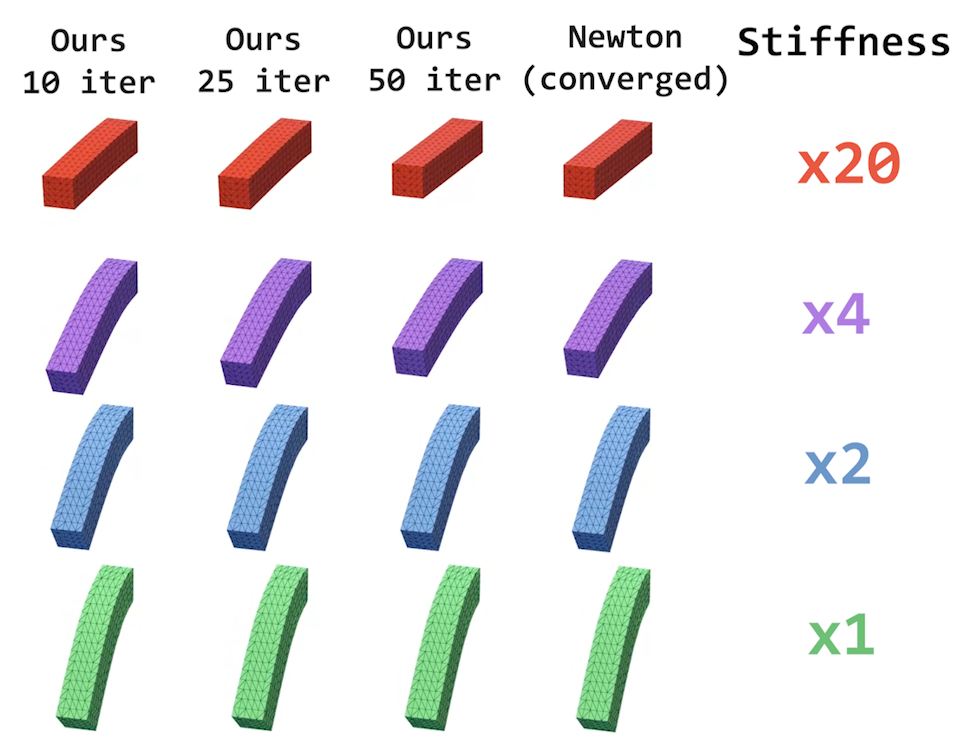}
	\caption{\textbf{Beam fall}. Visual convergence behavior of a beam (463 vertices, 1.5K tetrahedra), illustrated through simulations with varying material stiffness and iterations per frame. }
	\label{fig:beamfall}
\end{figure}

\begin{figure}[htbp]
	\centering
    \includegraphics[width=0.58\linewidth]{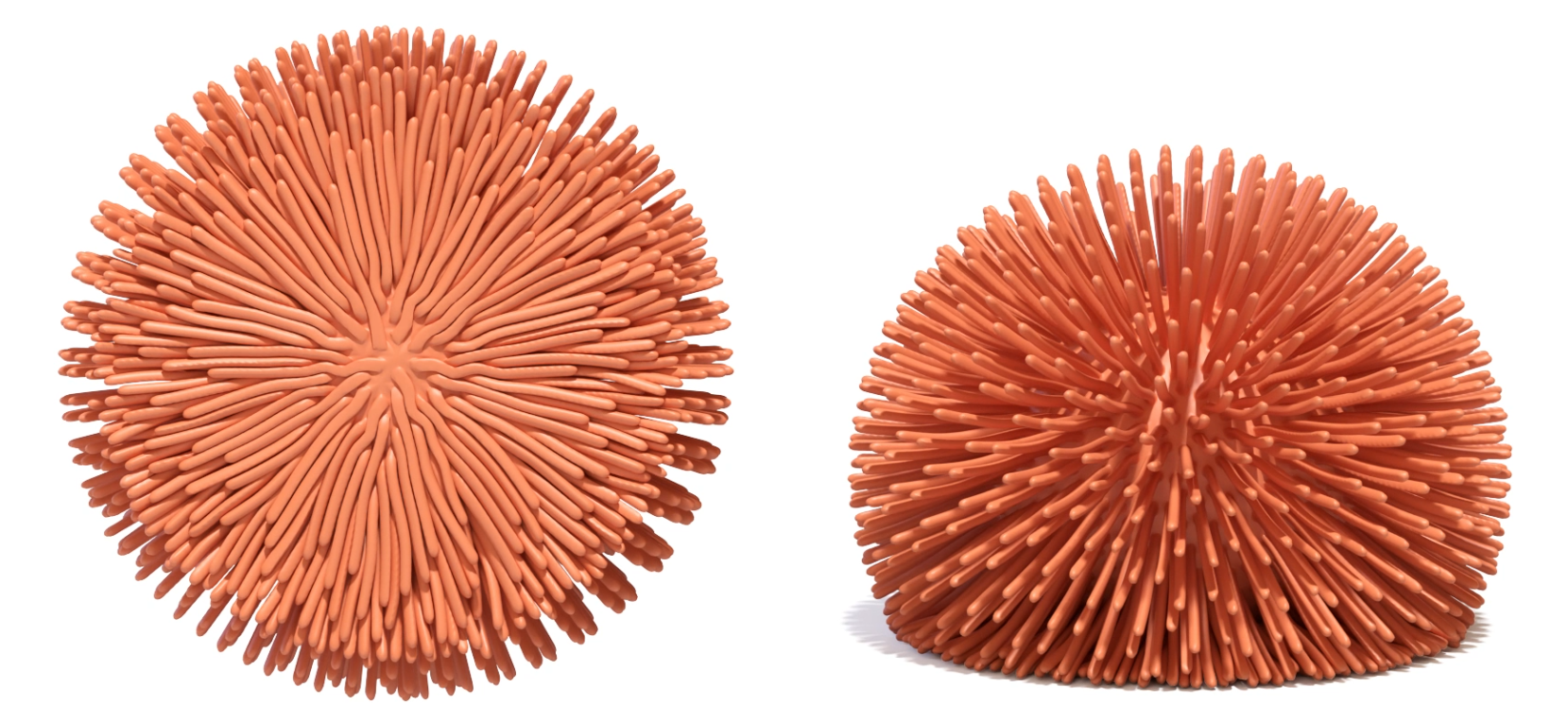}
	\caption{\textbf{Squishy Ball}. A squishy ball with tentacles (167.4K vertices, 478.5K tetrahedra) is simulated as it drops and impacts the ground, with a per-frame compute time of 4.3 ms and a timestep of 3 ms. }
	\label{fig:squishy}
\end{figure}

\begin{figure}[htbp]
	\centering
    \includegraphics[width=0.62\linewidth]{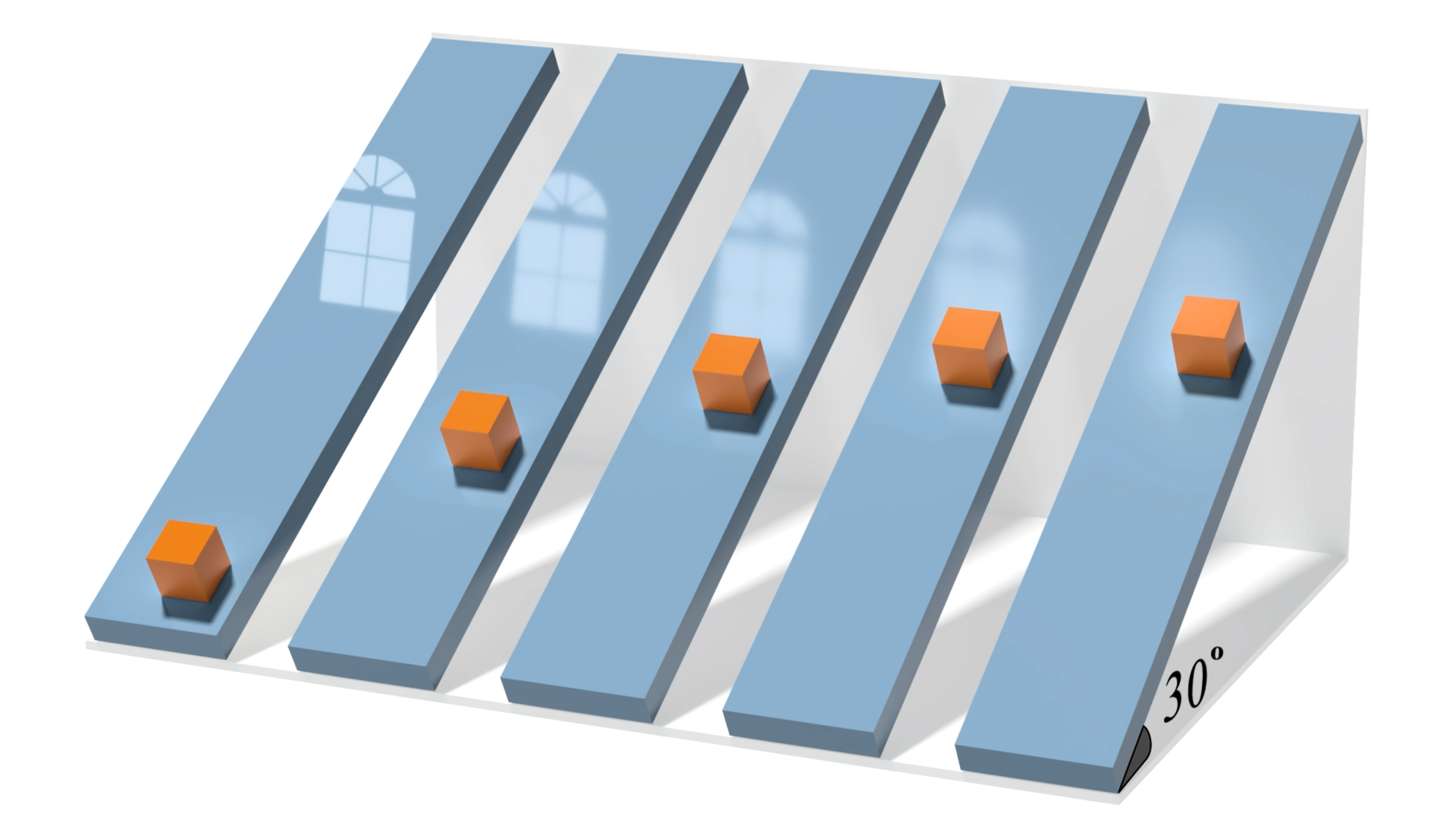}
	\caption{\textbf{Sliding Cube}. We test different friction coefficients $\mu_c$ for a cube sliding down an inclined plane. From left to right: $\mu_c = 0.0, 0.2, 0.4, 0.6, 0.8$. Our method can robustly handle various frictional contacts. }
	\label{fig:friction}
\end{figure}

\begin{figure}[htbp]
	\centering
    \includegraphics[width=0.26\linewidth]{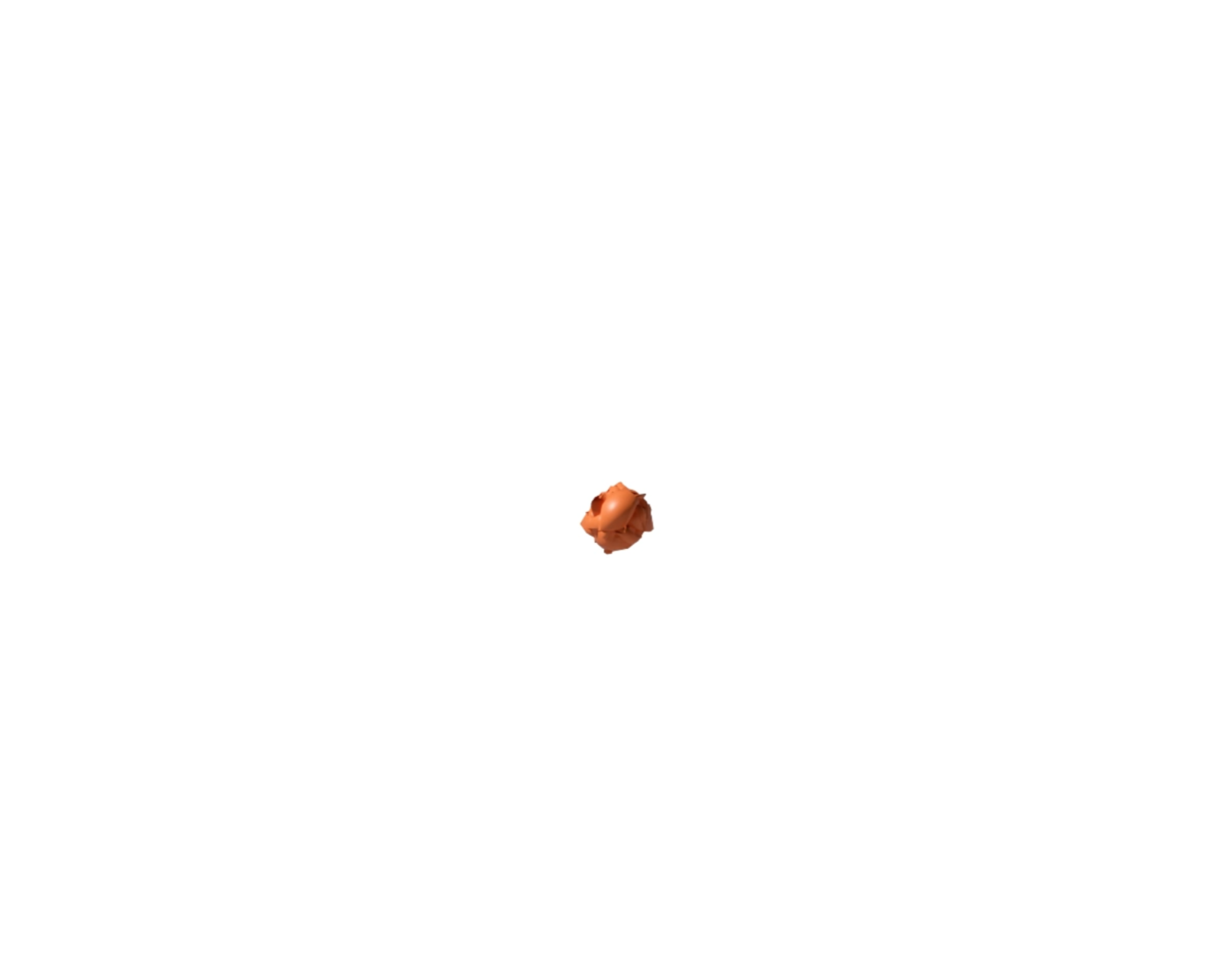}
    \includegraphics[width=0.26\linewidth]{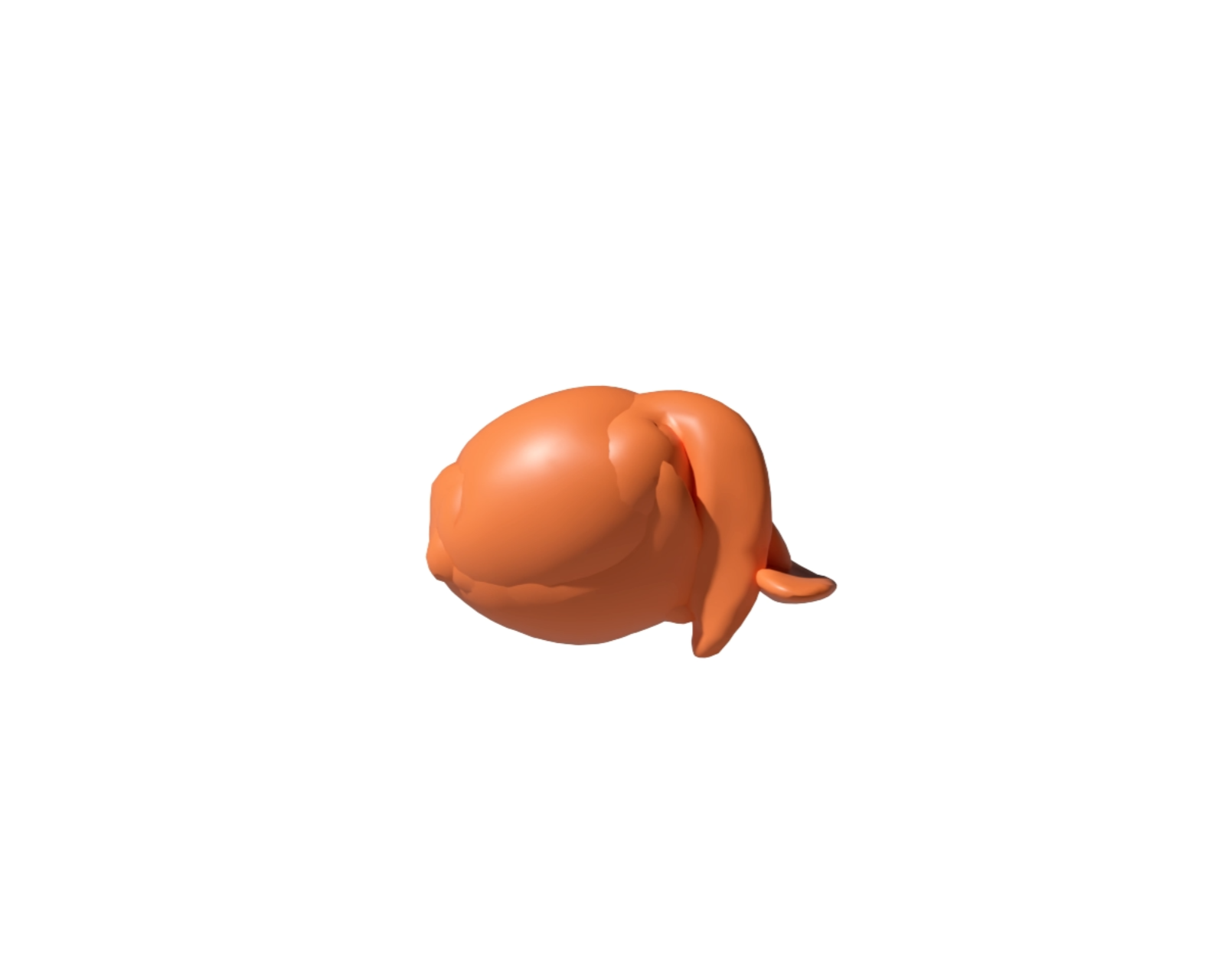}
    \includegraphics[width=0.26\linewidth]{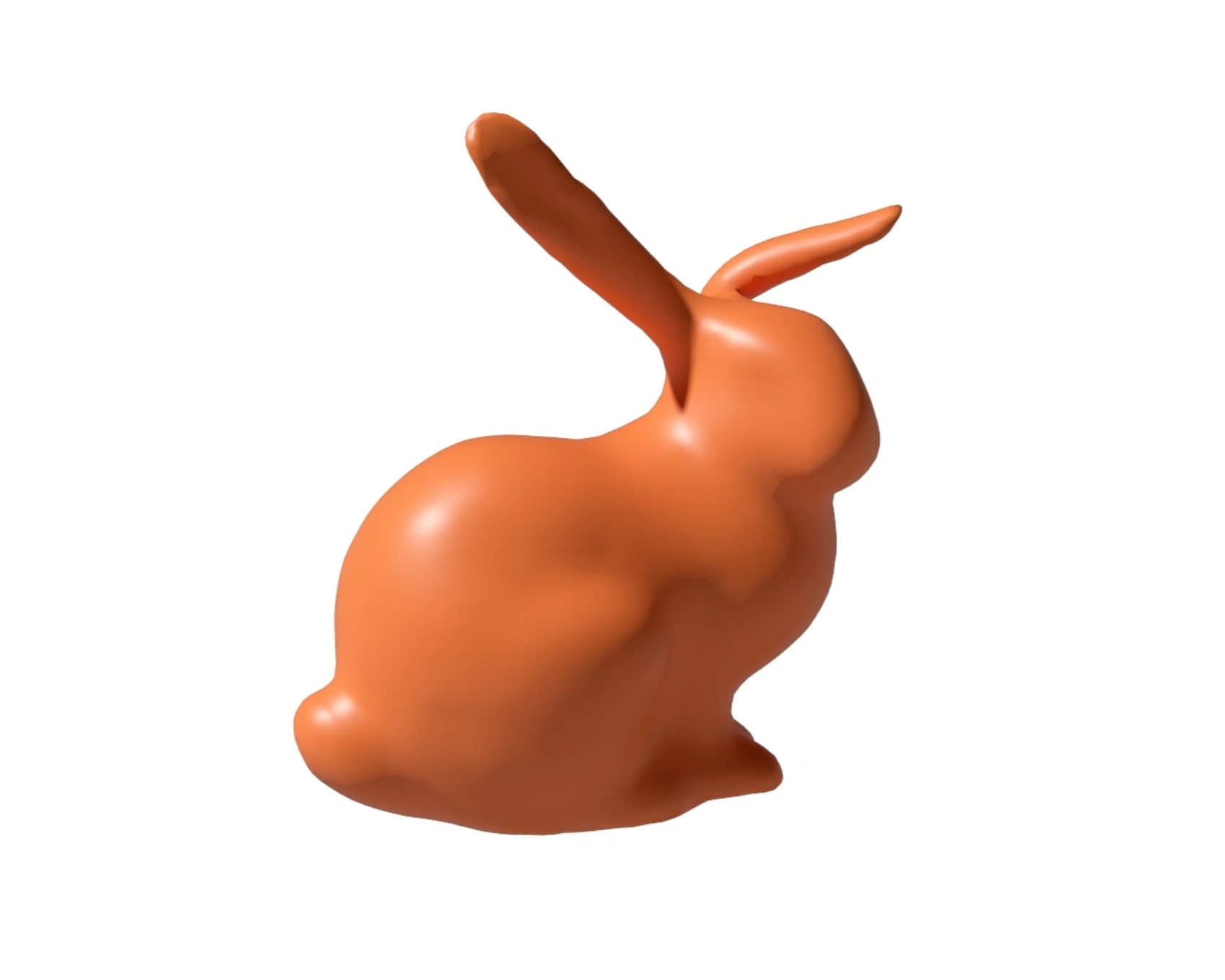}
    \includegraphics[width=0.76\linewidth]{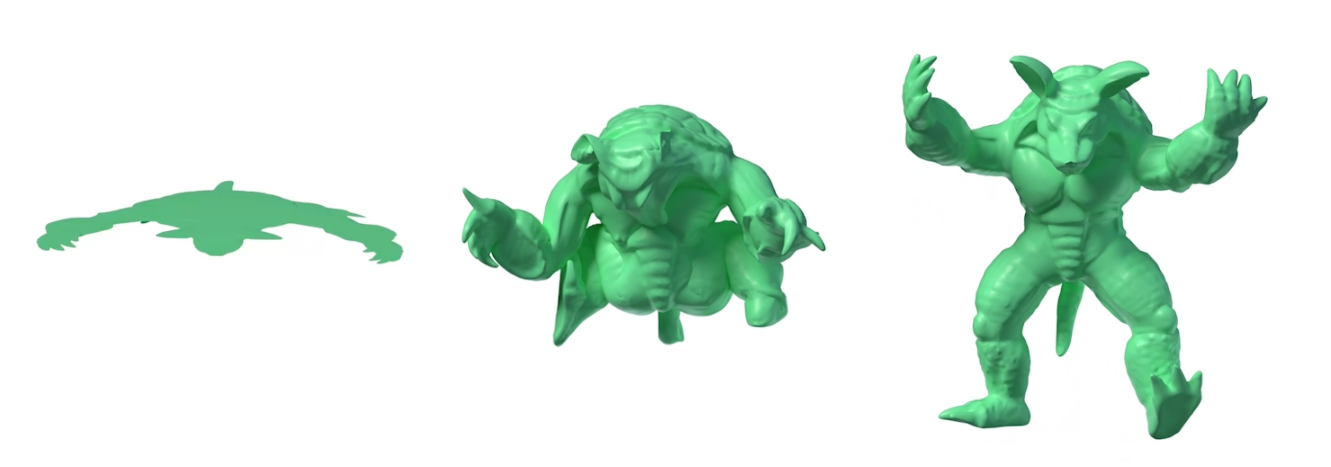}
    \includegraphics[width=0.76\linewidth]{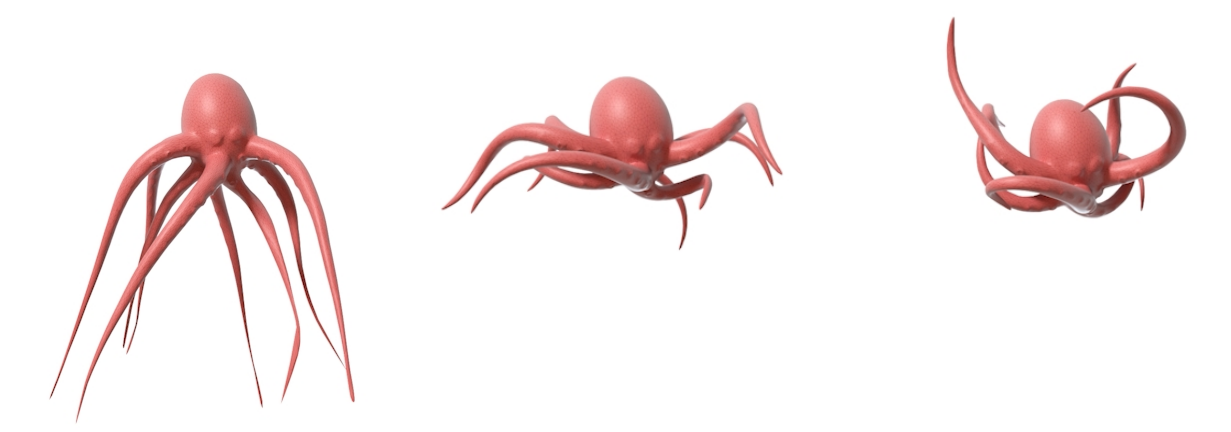}
	\caption{\highlightv2{\textbf{Stress Test}. Robustness to extreme initialization: \textbf{(Top)} a randomly initialized bunny model (3K vertices, 14K tetrahedra) returns to its reference configuration. \textbf{(Middle)} a highly flattened armadillo (15K vertices, 62K tetrahedra) recovers its original shape without artifacts. \textbf{(Bottom)} an octopus model (3K vertices, 13K tetrahedra) with eight tentacles recovers from a severely stretched state.}}
	\label{fig:shape}
\end{figure}

\begin{figure}[htbp]
	\centering
    \includegraphics[width=0.55\linewidth]{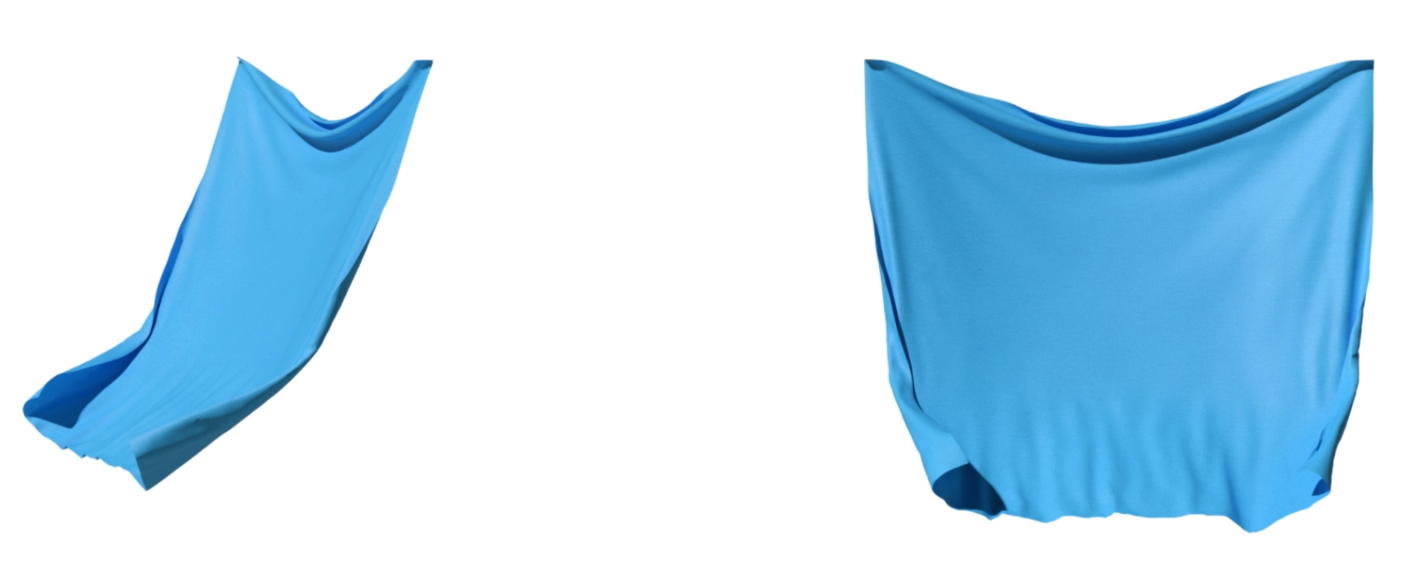}
	\caption{\textbf{Cloth}. Draping of a cloth with two fixed points. Our current implementation models cloth as a hyperelastic membrane (without bending). }
	\label{fig:cloth}
\end{figure}
\clearpage
\end{document}